\newtheorem{theorem}{Theorem}
\newtheorem{lemma}{Lemma}
\newtheorem{remark}{Remark}
\newtheorem{proposition}{Proposition}  
\newtheorem{example}{Example}
\newcommand{\N}{\mathbb N}
\newcommand{\R}{\mathbb R}
\newcommand{\C}{\mathbb C}
\newcommand{\hi}{\mathcal{H}} 
\newcommand{\hd}{{\mathcal{H}_\oplus}} 
\newcommand{\ki}{\mathcal{K}} 
\newcommand{\lh}{\mathcal{L(H)}} 
\newcommand{\lk}{\mathcal{L(K)}} 
\renewcommand{\th}{\mathcal{T(H)}} 
\newcommand{\tk}{\mathcal{T(K)}} 
\newcommand{\sh}{\mathcal{S(H)}} 
\newcommand{\tr}[1]{\mathrm{tr}\left[#1\right]} 
\newcommand{\kb}[2]{|#1\,\rangle\langle\,#2|} 
\newcommand{\Mo}{\mathsf{M}} 
\newcommand{\sfe}{\mathsf{E}} 
\newcommand{\M}{\mathcal M} 
\renewcommand{\O}{\mathrm{Obs}} 
\newcommand{\In}{\mathrm{Ins}} 
\newcommand{\ext}{\mathrm{Ext}\,} 
\newcommand{\hE}{\mathcal{E}} 
\def\<{\langle}
\def\>{\rangle}
\def\d{{\mathrm d}}
\newcommand{\bo}[1]{\mathcal{B}(#1)} 
\newcommand{\fii}{\varphi}
\newcommand{\la}{\lambda}
\newcommand{\tj}{\vartheta}
\newcommand{\CHI}[1]{\ensuremath{ \chi\raisebox{-1ex}{$\scriptstyle #1$} }}
\newcommand{\CHII}[1]{\ensuremath{{\hat\chi}\raisebox{-1ex}{$\scriptstyle #1$} }}
\newcommand{\ov}{\overline}
\newcommand{\lin}{{\rm lin}}
\newcommand{\id}{{\mathbbm1}}
\newcommand{\ve}{\bm{e}} 
\newcommand{\vn}{\bm{n}} 
\newcommand{\vp}{\bm{p}} 
\newcommand{\vsigma}{\bm{\sigma}} 
\newcommand{\mc}[1]{\mathcal{#1}}
\newcommand{\mb}[1]{\mathbb{#1}}
\begin{document} 

%
%
%
%
%
%


\title{Barycentric decomposition for quantum instruments}

\author{Juha-Pekka Pellonp\"a\"a}
\email{juhpello@utu.fi}
\address{Turku Centre for Quantum Physics, Department of Physics and Astronomy, University of Turku, FI-20014 Turku, Finland}
\author{Erkka Haapasalo}
\email{cqteth@nus.edu.sg}
\address{Centre for Quantum Technologies, National University of Singapore, Science Drive 2,
Block S15-03-18, Singapore 117543}
\author{Roope Uola}
\email{roope.uola@unige.ch}
\address{University of Geneva, 1211 Geneva, Switzerland}

\begin{abstract}
We present a barycentric decomposition for quantum instruments whose output space is finite-dimensional and input space is separable. As a special case, we obtain a barycentric decomposition for channels between such spaces and for normalized positive-operator-valued measures in separable Hilbert spaces. This extends the known results by Ali and Chiribella {\it et al.} on decompositions of quantum measurements, and formalises the fact that every instrument between finite-dimensional Hilbert spaces can be represented using only finite-outcome instruments.
\\

\noindent
PACS numbers: 03.65.Ta, 03.67.--a
\end{abstract}

\maketitle

\section{Introduction}

Quantum measurement theory has been the center of considerable attention in the last years. Traditionally being a tool for quantum foundations, the precise and general description of the quantum measurement procedure has become important especially in the fields of quantum information and quantum technology. For example, the subfields of quantum correlations \cite{JMreview}, quantum state discrimination \cite{BaKw,Barnett}, and quantum thermodynamics \cite{GURU} bear deep connections to and even necessitate the use of the mathematical framework of generalised measurements.

By now, generalised measurements, mathematically described by normalized positive-operator-valued measures (POVMs), have become a standard description of quantum measurements in the field of quantum information theory. This, together with the foundational interest on the topic \cite{Measurement}, has spurred a plethora of structural and conceptual results on, e.g., quantification of measurement incompatibility \cite{Cava,Natu,PUSU,ONE}, informativeness of quantum measurements \cite{Nooa}, and the convex structure of the set of quantum measurements \cite{DAri,Parta,Pe11}. Of special interest to our work are extreme measurements which have been extensively further studied in \cite{HAHE,HAPE2,HAPEU,HEPO,Pell2,Ketsuppi}.

One can go a step deeper in the description of the quantum measurement procedure. Generalised measurements (POVMs) only consider the measurement statistics, but do not account for the state change due to the measurement. To include this into the picture, one uses quantum instruments \cite{Measurement}. These can be seen as generalisations of the projection postulate to the realm of generalised measurements. Quantum instruments have entered quantum information theory more recently than generalised measurements, but have already found various applications in, e.g., the subfields of temporal correlations \cite{AM,macro}, quantum memories \cite{Cope,Ion,Seka}, self-testing \cite{Moh}, quantum thermodynamics \cite{GURU,BUL},  and programmability of quantum measurement devices \cite{Lin}. In contrast to generalised measurements, the structural results on quantum instruments motivated by the applications of quantum information have started to appear only very recently. To mention a few, resource-theoretic approach to properties of instruments was considered in \cite{Haapasalo2015,UoBu,UKA}, implementation of joint measurements in \cite{HAPE,HeMi2013,retriivinki}, and the geometric structure of instruments in \cite{DAPeSe,HaPe2021,instru1}. Convex structure of unconditioned state transformations, i.e.,\ quantum channels has also been studied extensively \cite{Tsuikkis}.

In this paper, we contribute to the geometric structure of quantum instruments by presenting a barycentric decomposition of quantum instruments. Our results work for instruments with a finite-dimensional output space and a separable input (Hilbert) space. This includes as a special case quantum channels between the mentioned spaces and generalised measurements acting on a separable Hilbert space. The latter generalises the well-known finite-dimensional result \cite{ChDASc}. Our main result also extends the known result stating that every finite-dimensional generalised measurement can be represented using finite-outcome measurements into the realm of finite-dimensional quantum instruments, which can be of interest to optimisation problems where all instruments have to be considered.

To our knowledge, the investigation into the convex decompositions of quantum measurements into extremes was begun by Ali in \cite{Ali} (see also \cite{Bene}). In fact, this work discussed sub-normalized positive-operator-valued measures in von Neumann algebras. However, in this work, POVMs could be given as barycentres of probability measures over the extreme points of the convex set of sub-normalized positive-operator-valued measures which, in the quantum case, also include measures which cannot be considered as generalized measurements. In \cite{ChDASc2}, Chiribella {\it et al.} study POVMs in finite-dimensional Hilbert spaces with standard Borel value spaces, finding barycentre presentations for them over the set of extreme generalized measurements. Moreover, in this finite-dimensional setting, extreme measurements are supported on a finite set \cite{ChDASc}. An important ingredient in this work is Lemma 6 of \cite{ChDASc2} which guarantees that the probability measure, whose barycentre the quantum measurement is, is actually supported by the set of extreme measurements. We will apply similar techniques in our work where we generalize this to a separable Hilbert space.

We start by reviewing the basic machinery we need for generalised measurements (POVMs) and instruments in Section \ref{sec:measurext}. We also recall the extremality results for these objects. In Section \ref{sec:bary} we go on to apply Choquet theory to our instruments; as good reference material for this, we mention \cite{Alfsen}. This section culminates with expressing instruments with separable input Hilbert space and finite-dimensional output space as barycentres over probability measures supported by the set of extreme instruments. As a special case of this, we obtain the infinite-dimensional generalization of the related result of \cite{ChDASc2}. Finally, we give barycentric decompositions for measurements, channels, and states obtained as special cases of our main result. We also consider an example involving qubit measurements. However, we start by considering a simple example of spin direction measurements.

\subsection{Example: Spin direction}

Let $\mathbb S^2:=\big\{\vn=(n^1,n^2,n^3)\in\R^3\,\big|\,(n^1)^2+(n^2)^2+(n^3)^2=1\big\}$ be the unit sphere equipped with the Borel $\sigma$-algebra $\mathcal B(\mathbb S^2)$ of the usual topology.
We will use the spherical coordinate parametrization $n^1=\sin\theta \cos\fii$, $n^2=\sin\theta \sin\fii$,
$n^3=\cos\theta$, $\theta\in[0,\pi]$, $\fii\in[0,2\pi)$, and the area measure $\d\vn=\sin\theta\,\d\theta\,\d\fii$ giving
$\int_{\mathbb S^2}\d\vn=4\pi$. The (qubit) spin direction observable \cite[p.\ 110]{Teikonkirja} is
$$
\mathsf D(X):=\frac{1}{4\pi}\int_X \big(\id+\vn\cdot\vsigma\big)\d\vn,\qquad X\in\mathcal B(\mathbb S^2);
$$
here 
$$
\sigma_0=\id=\left(\begin{matrix} 1&0\\ 0&1 \end{matrix}\right),\quad
\sigma_1=\left(\begin{matrix} 0&1\\ 1&0 \end{matrix}\right),\quad
\sigma_2=\left(\begin{matrix} 0&-i\\ i&0 \end{matrix}\right),\quad
\sigma_3=\left(\begin{matrix} 1&0\\ 0&-1 \end{matrix}\right)
$$
are the Pauli matrices and
$$
P_{\vn}:= \frac12(\id+\vn\cdot\vsigma)
=\frac12\begin{pmatrix}
1+n^3 & n^1-i n^2 \\
n^1+i n^2 & 1-n^3
\end{pmatrix}
=\frac12\begin{pmatrix}
1+\cos\theta & \sin\theta \, e^{-i\fii} \\
\sin\theta \, e^{i\fii} & 1-\cos\theta
\end{pmatrix}
$$
is a rank-1 projection for all $\vn\in\mathbb S^2$. Since
$$
\int_{\mathbb S^2} \cos(2\fii)\d\mathsf D(\vn)=0
$$
it follows that $\mathsf D$ is not an extreme element of the convex set of qubit POVMs on $\mathcal B(\mathbb S^2)$ \cite{Pe11}. Indeed, we can write $\mathsf D=\frac12\mathsf D_++\frac12\mathsf D_-$
where $\mathsf D_{\pm}(X):=\int_{X} [1\pm \cos(2\fii)]\d\mathsf D(\vn)$ are unequal POVMs.
Since extreme POVMs are discrete in finite dimensions \cite{ChDASc} one cannot write $\mathsf D$ as a finite (or countably infinite) convex combination of extreme POVMs.
However, if we define projection valued measures\footnote{Projection valued measures are automatically extreme \cite{Pe11}.} 
$$
\mathsf S_{\vn}(X):=P_{\vn}\delta_{\vn}(X)+P_{-\vn}\delta_{-\vn}(X),\qquad
X\in\mathcal B(\mathbb S^2),
$$
where $\delta_{\vn}$ is a Dirac (point) measure at $\vn$, we can write
$$
\mathsf D(X)=\frac{1}{2\pi}\int_{\mathbb S^2_+} \mathsf S_{\vn}(X)\d\vn,\qquad X\in\mathcal B(\mathbb S^2),
$$
where $\mathbb S^2_+:=\big\{(n^1,n^2,n^3)\in\mathbb S^2\,\big|\,n_3\ge 0\big\}$ with the area $2\pi$. 
Since now $(2\pi)^{-1}\d\vn=:\d p_{\mathsf D}(\mathsf S_{\vn})$ can trivially be extended to the probability measure $p_{\mathsf D}$ on the set of all extreme qubit POVMs on $\mathcal B(\mathbb S^2)$, we see that $\mathsf D$ is the {\it barycenter} of $p_{\mathsf D}$ (or a `continuous convex combination').
 
\section{Sets of measurements and instruments and their extreme points}\label{sec:measurext}

For any Hilbert space $\hi$ we let $\lh$ [resp.\ $\th$] denote the Banach space of bounded [resp.\ trace-class] operators on $\hi$. The operator norm of $\lh$ is denoted by $\|\,\cdot\,\|$ and the trace norm of $\th$ by $\|\,\cdot\,\|_1$. We say that a positive operator $\rho\in\th$ of trace 1 is a state (or a density operator) and denote the convex set of states by $\sh$. The identity operator of any Hilbert space $\hi$ is denoted by $\id_\hi$. Throughout this article, we let $\hi$ and $\ki$ be {\it separable} (complex) nontrivial Hilbert spaces and $(\Omega,\Sigma)$ be a measurable space (i.e.\ $\Sigma$ is a $\sigma$-algebra of subsets of a nonempty set $\Omega$). For any ($\sigma$-finite) measure $\mu:\,\Sigma\to[0,\infty]$ and for all $1\le p\le\infty$, we let $L^p(\mu)$ denote the corresponding Lebesgue space, and let $\hd$ denote a direct integral $\int_\Omega^\oplus\hi_x\d\mu(x)$ of {separable} Hilbert spaces $\hi_x$. For each $f\in L^\infty(\mu)$, we denote briefly by $\hat f$ the multiplicative (i.e.\ diagonalizable) bounded operator $(\hat f\psi)(x):=f(x)\psi(x)$ for all $\psi\in\hd$ and $\mu$--almost all $x\in\Omega$. Especially, one has the {canonical spectral measure} $\Sigma\ni X\mapsto\CHII X\in\mathcal L(\hd)$ (where $\CHI X$ is the characteristic function of  $X$).

If $\Omega$ is a topological space, we always let $\Sigma$ be its Borel $\sigma$-algebra $\mathcal B(\Omega)$ and denote by $C(\Omega)$ the set of continuous functions $\Omega\to\C$. Let $\Omega$ be a locally compact Hausdorff space, and let $C_0(\Omega)\subseteq C(\Omega)$ consist of (bounded) functions vanishing at infinity;\footnote{That is, if $f\in C_0(\Omega)$ then, for each $\epsilon>0$, there exists a compact set $K$ such that $\sup_{x\in\Omega\setminus K}|f(x)|<\epsilon$. Now $f$ can uniquely be extended to $C(\ov\Omega)$ by setting $f(\infty):=0$ and this embedding $C_0(\Omega)\to C(\ov\Omega)$ is an isometry with closed image.} especially, if $\Omega$ is compact then $C_0(\Omega)= C(\Omega)$.  As usual, we equip $C_0(\Omega)$ with the norm $f\mapsto\|f\|_\infty:=\sup_{x\in\Omega}|f(x)|$ so $C_0(\Omega)$ becomes a commutative $C^*$--algebra. If $\Omega$ is not compact we let $\ov\Omega:=\Omega\biguplus\{\infty\}$ be the Alexandroff (one-point) compactification of $\Omega$; it is a compact Hausdorff space with an extra (unphysical) point $\infty$, the point at infinity.  [If $\Omega$ is already compact we denote $\ov\Omega=\Omega$, i.e.\ no compactification is needed.] Now $C(\ov\Omega)$ can be interpreted as a unitalization of $C_0(\Omega)$ as follows: For any $g\in C(\ov\Omega)$ write\footnote{That is, $g$ is `constant at infinity.'}
$$
g(x)=\underbrace{g(x)-g(\infty)}_{=:\,f(x)}+\underbrace{g(\infty)}_{\in\C},\qquad x\in\ov\Omega, 
$$
where the restriction $f|_\Omega\in C_0(\Omega)$ and $\CHI{\ov\Omega}$ is the unit, so one has 
$C(\ov\Omega)\cong C_0(\Omega)\oplus\C$. Finally,  we note that $C_0(\Omega)$ [and $C(\ov\Omega)$] is separable if and only if $\Omega$ is second countable.

The Riesz-Markov-Kakutani representation theorem says that for any bounded (i.e.\ continuous) linear map $F:\,C_0(\Omega)\to\C$ there is a {\it unique regular} (Radon) complex measure $\mu_F$ on $\Omega$ such that $F(f)=\int_\Omega  f(x)\d\mu_F(x)$ for all $f\in C_0(\Omega)$.\footnote{The norm $\sup\{|F(f)|\;|\;\|f\|_\infty\le1\}$ of $F$ equals the total variation norm $|\mu_F|(\Omega)$ of $\mu_F$, and $F$ is positive if and only if $\mu_F$ is positive.} Hence, the topological dual of $C_0(\Omega)$ is the Banach space $\mathrm{rca}(\Omega)$ of the regular measures $\mathcal B(\Omega)\to\C$. We equip $\mathrm{rca}(\Omega)$ with the weak${}^*$ topology which means that the maps $\mu\mapsto\int_\Omega f(x)\d\mu(x)$ are continuous and a net $(\mu_i)_{i\in\mathcal I}\subseteq\mathrm{rca}(\Omega)$ (where $\mathcal I$ is a directed set) converges to $\mu\in\mathrm{rca}(\Omega)$ if $\lim_{i\in\mathcal I}\int_\Omega f(x)\d\mu_i(x)=\int_\Omega f(x)\d\mu(x)$ for all $f\in C_0(\Omega)$. By the Banach-Alao\u{g}lu theorem, for each $r>0$, the `closed ball'  ${\rm ball}(\Omega)_r:=\big\{\mu\in\mathrm{rca}(\Omega)\;\big|\;|\mu|(\Omega)\le r\big\}$ is weak${}^*$ compact.

Note that any $\mu\in\mathrm{rca}(\Omega)$ can be extended to $\mathcal B(\ov\Omega)$ by setting $\mu(\{\infty\}):=0$. Hence, $\mathrm{rca}(\Omega)\subseteq\mathrm{rca}(\ov\Omega)$. On the other hand, for each $\nu\in\mathrm{rca}(\ov\Omega)$, one can split
$$
\nu=\underbrace{[\nu-\nu(\{\infty\})\delta_\infty]}_{=:\,\mu}+\underbrace{\nu(\{\infty\})}_{\in\C}\delta_\infty
$$
where $\mu(\{\infty\})=0$ and $\mu|_{\mathcal B(\Omega)}=\nu|_{\mathcal B(\Omega)}\in\mathrm{rca}(\Omega)$; here $\delta_\infty$ is the Dirac point measure at $\infty$. Thus, $\mathrm{rca}(\ov\Omega)\cong\mathrm{rca}(\Omega)\oplus\C$.\footnote{But usually $\int_{\ov\Omega}g(x)\d\nu(x)=\int_{\Omega}f(x)\d\mu(x)+g(\infty)\nu(\ov\Omega)\ne\int_{\Omega}f(x)\d\mu(x)+g(\infty)\nu(\{\infty\})$.}

If $\Omega$ is a locally compact second countable Hausdorff space, then $\Omega$ and $\ov\Omega$ are Polish  (i.e.\ separable and completely metrizable) spaces and any complex measure is regular so, e.g., $\mathrm{rca}(\Omega)=\mathrm{ca}(\Omega)$, the space of all complex measures on $\Omega$; in addition, $\big(\Omega,\mathcal B(\Omega)\big)$ is a standard Borel (measurable) space so that $\Omega$ is countable\footnote{That is, finite or countably infinite.} (and $\mathcal B(\Omega)=2^\Omega$) or $\big(\Omega,\mathcal B(\Omega)\big)$ is isomorphic to $\big(\R,\mathcal B(\R)\big)$.\footnote{Usually, in physics, this holds in `continuous cases' where $\Omega$ is a finite-dimensional second countable Hausdorff manifold (which is automatically locally compact).} Since now $C_0(\Omega)$ is separable, i.e., has a dense sequence $\{f_n\}_{n=1}^\infty$, the weak${}^*$ topology of ${\rm ball}_r(\Omega)$ is metrizable\footnote{The whole $\mathrm{ca}(\Omega)$ is not necessarily metrizable.} with a metric 
$$
d(\mu,\nu):=\sum_{n=1}^\infty \frac{\big|\int f_n\d\mu-\int f_n\d\nu\big|}{2^n\Big(1+\big|\int f_n\d\mu-\int f_n\d\nu\big|\Big)}.
$$

\begin{remark}\rm\label{Remark1}
The convex set of probability\footnote{A measure $\mu\in\mathrm{rca}(\Omega)$ belongs to ${\rm prob}(\Omega)$ iff $\mu(X)\ge 0$ for all $X\in\bo\Omega$ [or $\int_\Omega f(x)\d\mu(x)\ge0$ for all $f\in C_0(\Omega)$ such that $f\ge0$] and $\mu(\Omega)=1$; this normalization condition cannot be expressed as an integral of a function of $C_0(\Omega)$ if $\Omega$ is not compact (since then $\CHI\Omega\notin C_0(\Omega)$). However, $\CHI{\ov\Omega}\in C(\ov\Omega)$ so $\lim_{i\in\mathcal I}\int_{\ov\Omega} \CHI{\ov\Omega}(x)\d\mu_i(x)=1$ and $\lim_{i\in\mathcal I}\int_{\ov\Omega} g(x)\d\mu_i(x)\ge0$ for all $g\ge0$ so that ${\rm prob}(\ov\Omega)$ is closed and compact [here $(\mu_i)_{i\in\mathcal I}\subseteq{\rm prob}(\ov\Omega)$ is assumed to converge in $\mathrm{rca}(\ov\Omega)$; actually in ${\rm ball}_1(\ov\Omega)$].} (Radon) measures ${\rm prob}(\ov\Omega)\subset\mathrm{rca}(\ov\Omega)$ is weak${}^*$ compact but ${\rm prob}(\Omega)$ is not (necessarily) compact: The basic example is the sequence of Dirac (point) measures $n\mapsto\delta_n$ on $\R$ which converges to the zero measure since $\lim_{n\to\infty}\int_\R f(x)\d\delta_n(x)=\lim_{n\to\infty} f(n)=0$ for all $f\in C_0(\R)$. However, on the (compact) extended  real line $\ov\R=\R\cup\{\infty\}$ one gets $\lim_{n\to\infty}\int_{\ov\R} g(x)\d\delta_n(x)=\lim_{n\to\infty} g(n)=g(\infty)$ for all $g\in C(\ov\R)$, that is, $\delta_n\to\delta_\infty$.\footnote{This is the main reason why we use the compactification $\ov\Omega$ instead of $\Omega$ in the following.}
\end{remark}

\subsection*{Operator measures}

Let $\Mo:\,\Sigma\to\lh $ be an {\it operator (valued) measure,} i.e.\ (ultra)weakly $\sigma$-additive mapping. We call $\Mo$ \emph{positive} if for all $X\in\Sigma$,  $\Mo(X)\geq 0$, \emph{normalized} if $\Mo(\Omega)=\id_\hi$, and \emph{projection valued} if $\Mo(X)^2=\Mo(X)^*=\Mo(X)$ for all $X\in\Sigma$.
Normalized positive operator valued measures (POVMs) or {\it semispectral measures} are identified with \emph{(quantum) observables} whereas normalized projection valued measures (PVMs) are called \emph{spectral measures} or \emph{sharp observables}. The number $\tr{\rho\Mo(X)}\ge 0$ is interpreted as the probability of getting a measurement outcome $x$ belonging to the set $X\in\Sigma$ when the system is in the state $\rho\in\sh$ and a measurement of the POVM $\Mo:\,\Sigma\to\lh$ is performed. The convex set of POVMs $\Mo:\Sigma\to\lh $ is denoted by $\O(\Sigma,\,\hi)$ and its extreme points by $\ext\O(\Sigma,\,\hi)$. A convex combination (observable) $t\Mo_1+(1-t)\Mo_2$, $0<t<1$, can be viewed as a randomization of measuring procedures represented by the observables $\Mo_1$ and $\Mo_2$. An extreme observable $\Mo\in\ext\O(\Sigma,\,\hi)$ cannot be obtained as a (nontrivial) convex combination meaning that the measurement of $\Mo$ involves no redundancy caused by mixing different measuring schemes.

\subsection*{Instruments}

Let $\hi$ and $\ki$ be Hilbert spaces. A linear map $\Phi:\mc L(\ki)\to\mc L(\hi)$ is an {\it operation} if it is {\it completely positive} (CP) and $\Phi(\id_\ki)\leq\id_\hi$. If $\Phi(\id_\ki)=\id_\hi$ (unitality), $\Phi$ is a {\it channel}. If the operation $\Phi$ is normal, i.e.,\ for any increasing net $(B_i)_{i\in\mathcal I}$ of selfadjoint operators $B_i\in\mc L(\ki)$ bounded from above we have $\Phi\big(\sup_{i\in\mathcal I}B_i)=\sup_{i\in\mathcal I}\Phi(B_i)$ (or equivalently $\Phi$ is ultraweakly continuous), we say that $\Phi$ is a {\it quantum operation} (or {\it quantum channel} if $\Phi$ is unital). We say that a map $\M:\,\Sigma\times \lk\to \lh$ is a {\it (Heisenberg) instrument} if
\begin{itemize}
\item[(i)] for all $X\in\Sigma$, the mapping $\lk\ni B\mapsto \M(X,B)\in\lh$ is an operation,
\item[(ii)] $\M(\Omega,\id_\ki)=\id_\hi$, and
\item[(iii)] $\tr{\rho\M(\cup_{n=1}^\infty X_n,B)}=\sum_{n=1}^\infty\tr{\rho\M(X_n,B)}$ for any pairwise disjoint sequence $\{X_n\}_{n=1}^\infty\subseteq\Sigma$ and for all $\rho\in\th$, $B\in\lk$. 
\end{itemize}
If, additionally,
\begin{itemize}
\item[(iv)] for all $X\in\Sigma$, $B\mapsto\mc M(X,B)$ is normal (i.e.,\ a quantum operation),
\end{itemize}
we say that $\mc M$ is a {\it quantum instrument}. Let $\M$ be a Heisenberg instrument. For any $B\in\lk$, we define  an operator measure
$$
\Mo^B:\,\Sigma\to\lh,\,X\mapsto \Mo^B(X):=\M(X,B),
$$ 
see item (iii) above. It is positive if $B\ge 0$ and normalized if $B=\id_\ki$. Hence, $\Mo^{\id_\ki}$ is a POVM, the {\it associate observable of $\M$.} Also, $\M$ defines a channel $B\mapsto\M(\Omega,B)$, the {\it associate channel of $\M$} which is a quantum channel if $\M$ is a quantum instrument. Moreover, if $\Phi:\,\lk\to\lh$ is a [quantum] channel, then by choosing $\Omega=\{0\}$ and $\Sigma=2^{\{0\}}=\big\{\emptyset,\{0\}\big\}$, one can define a  [quantum] instrument $\M_\Phi(\{0\},B):=\Phi(B)$, $B\in\lk$. Similarly, for any POVM $\Mo:\,\Sigma\to\lh$ there exist a  quantum instrument $\M^\Mo:\,\Sigma\times\mathcal \C\to\lh$ defined by $\M^\Mo(X,c):=c\, \Mo(X)$ where $c\in\C\cong\mathcal L(\C)$ (via $c\mapsto c\kb11$). We call the instruments $\M_\Phi$ and $\M^\Mo$ {\it trivial instruments} associated with $\Phi$ and $\Mo$, respectively. Thus, it follows that all general results for instruments are applicable to channels and POVMs.

Using the duality $\mc T(\hi)^*=\mc L(\hi)$, any quantum operation $\Phi$ can be identified with a trace-non-increasing completely positive linear map $\Phi_*:\mc T(\hi)\to\mc T(\ki)$ through
$$
\tr{\Phi_*(\rho)B}=\tr{\rho\Phi(B)}
$$
for all $\rho\in\mc T(\hi)$ and $B\in\mc L(\ki)$. The map $\Phi\mapsto\Phi_*$ is a bijection from the set of quantum operations onto the set of trace-non-increasing completely positive\footnote{The product maps $\Lambda_n:\mc T(\hi\otimes\C^n)\to\mc T(\ki\otimes\C^n)$ are positive for all $n\in\N$ where $\Lambda_n(\rho\otimes A)=\Lambda(\rho)\otimes A$ for all $\rho\in\mc T(\hi)$ and all $(n\times n)$-complex matrices $A$.} linear maps $\Lambda:\,\mc T(\hi)\to\mc T(\ki)$.  We call the maps $\Lambda$ of the latter set as {\it Schr\"{o}dinger operations}. Thus, a quantum operation $\Phi$ can be expressed equivalently in the Heisenberg picture $\Phi:\mc L(\ki)\to\mc L(\hi)$ or in the Schr\"{o}dinger picture $\Phi_*:\mc T(\hi)\to\mc T(\ki)$. Motivated by this, we call a map $\mc I:\Sigma\times\mc T(\hi)\to\mc T(\ki)$ a {\it Schr\"{o}dinger instrument} if
\begin{itemize}
\item[(i)] for all $X\in\Sigma$, the mapping $\mc T(\hi)\ni \rho\mapsto \mc I(X,\rho)\in\mc T(\ki)$ is a Schr\"{o}dinger operation,
\item[(ii)] $\mc T(\hi)\ni \rho\mapsto\mc I(\Omega,\rho)\in\mc T(\ki)$ is trace-preserving, and
\item[(iii)] $\tr{\mc I(\cup_{n=1}^\infty X_n,\rho)B}=\sum_{n=1}^\infty\tr{\mc I(X_n,\rho)B}$ for any pairwise disjoint sequence $\{X_n\}_{n=1}^\infty\subseteq\Sigma$ and for all $\rho\in\th$, $B\in\lk$.
\end{itemize}
Now the map $\M\mapsto\M_*$ defined through
$$
\tr{\M_*(X,\rho)B}=\tr{\rho\M(X,B)}
$$
for all $X\in\Sigma$, $\rho\in\th$, and $B\in\lk$ is a bijection of the set of quantum instruments onto the set of Schr\"{o}dinger instruments. When $\mc I=\mc M_*$ for a  quantum instrument $\mc M$, we also denote $\mc M=\mc I^*$.

The convex set of quantum instruments $\M:\,\Sigma\times \lk\to \lh$ is denoted by $\In(\Sigma,\,\ki,\,\hi)$ and its extreme points by $\ext\In(\Sigma,\,\ki,\,\hi)$. When we drop normality, we denote the set of Heisenberg instruments by $\In_{\rm H}(\Sigma,\,\ki,\,\hi)$ and the set of the extreme points of this convex set by $\ext\In_{\rm H}(\Sigma,\,\ki,\,\hi)$. We may give the following extremality characterization for instruments. Note that the measure $\mu$ of the theorem below exists; one can choose $\mu(X)=\tr{\rho\M(X,\id_\ki)}$ for all $X\in\Sigma$ where $\rho$ is a faithful state on $\hi$, i.e.,\ all the eigenvalues of $\rho\in\sh$ are strictly positive.

\begin{theorem}\label{th2}\cite{instru1}
Let $\M\in\In(\Sigma,\,\ki,\,\hi)$ be a quantum instrument and $\mu:\,\Sigma\to[0,\infty]$ a $\sigma$-finite measure such that $\M(\cdot,\id_{\mc K})$ is absolutely continuous with respect to $\mu$. There exists a direct integral $\hd=\int_\Omega^\oplus\hi_x\d\mu(x)$ (with $\dim\hi_x\le\dim\hi\dim\ki$) such that, for all $X\in\Sigma$ and $B\in\lk$,
\begin{enumerate}
\item $\M(X,B)=Y^*(B\otimes\CHII X)Y$ where $Y:\,\hi\to\ki\otimes\hd$
is an isometry such that 
$$
\lin\big\{(B'\otimes\CHII{X'})Y\psi\,\big|\,B'\in\lk,\;X'\in\Sigma,\;\psi\in\hi\big\}
$$
is dense in $\ki\otimes\hd$ (a minimal Stinespring dilation for $\M$).  
\item $\M\in\ext\In(\Sigma,\ki,\hi)$ if and only if, for any decomposable operator $D=\int_\Omega^\oplus D(x)\d\mu(x)\in\mathcal L(\hd)$, the condition $Y^*(\id_{\ki}\otimes D)Y=0$ implies $D=0$.
\item Let $\Omega$ be a second countable Hausdorff space, $\dim\hi <\infty$, and $\M\in\ext\In(\bo\Omega,\ki,\hi)$.
Then $\M$ is concentrated on a finite set, i.e.,
\begin{equation*}
\M(X,B)=\sum_{i=1}^N \CHI X(x_i) \Phi_i(B),\qquad X\in\bo\Omega,\;B\in\lk,
\end{equation*}
for some finite number $N\leq(\dim\hi)^2$ of elements $x_1,\ldots,x_N\in\Omega$ and operations $\Phi_i:\,\lk\to\lh$.
\end{enumerate}
\end{theorem}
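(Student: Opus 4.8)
The plan is to handle the three parts in order, viewing (1) and (2) as the minimal Stinespring dilation of an instrument and its attendant extremality test, and spending the real effort on (3). For (1) I would run the standard GNS/Stinespring construction on the completely positive, $\sigma$-additive data $(X,B)\mapsto\M(X,B)$: form a sesquilinear form on $\lk\odot\lin\{\CHII X\,|\,X\in\Sigma\}\odot\hi$ whose positivity is exactly complete positivity of the operations $\M(X,\cdot)$, pass to the quotient and completion, and note that axiom (iii) turns $X\mapsto\CHII X$ into a spectral measure on the dilation space. Diagonalizing the abelian von Neumann algebra it generates produces the direct integral $\hd=\int_\Omega^\oplus\hi_x\,\d\mu(x)$ and the isometry $Y$ (isometric by the normalization (ii)), with the density/minimality clause true by construction and the fibre bound $\dim\hi_x\le\dim\hi\dim\ki$ read off from the three tensor slots. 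For (2), a nonzero self-adjoint decomposable $D$ with $-\id\le D\le\id$ and $Y^*(\id_\ki\otimes D)Y=0$ yields $\M_\pm(X,B):=Y^*\big(B\otimes\CHII X(\id\pm D)\big)Y$; decomposable operators commute with every diagonalizable $\CHII X$, so these are genuine instruments, they are normalized precisely because $Y^*(\id_\ki\otimes D)Y=0$, and $\M=\tfrac12(\M_++\M_-)$ with $\M_+\neq\M_-$ contradicts extremality, the converse being the Radon--Nikodym theorem for completely positive maps together with minimality.

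The crux is (3). The key move is to feed only the \emph{diagonalizable} operators into the extremality test (2): each $f\in L^\infty(\mu)$ gives the decomposable operator $\hat f=\int_\Omega^\oplus f(x)\id_{\hi_x}\,\d\mu(x)$, and $\id_\ki\otimes\hat f$ is again diagonalizable on $\ki\otimes\hd$, so the $\C$-linear map
$$
\Lambda\colon L^\infty(\mu)\to\lh,\qquad \Lambda(f):=Y^*(\id_\ki\otimes\hat f)Y,
$$
is, by (2), \emph{injective} (since $\hat f=0$ exactly when $f=0$ $\mu$-almost everywhere). As $\dim\hi<\infty$, the target $\lh$ has complex dimension $(\dim\hi)^2$, so injectivity of $\Lambda$ forces $\dim_\C L^\infty(\mu)\le(\dim\hi)^2$. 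A measure whose $L^\infty$ is finite-dimensional can carry no non-atomic part and only finitely many atoms: a non-atomic piece would split into infinitely many disjoint sets of positive measure, and infinitely many atoms would likewise give infinitely many independent indicator functions. Hence $\mu$ is purely atomic with $N\le(\dim\hi)^2$ atoms.

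It remains to realize the atoms as points and read off the decomposition. Since $\Omega$ is second countable and Hausdorff, a countable base separates points and generates $\bo\Omega$, so $(\Omega,\bo\Omega)$ is a countably generated, separating (standard Borel) space; on such a space every atom of $\mu$ agrees, up to a $\mu$-null set, with a single point. Calling these points $x_1,\dots,x_N$ and choosing $\mu(\cdot)=\tr{\rho\M(\cdot,\id_\ki)}$ with $\rho$ a faithful state makes $\mu$ and the instrument mutually absolutely continuous, so $\M$ is concentrated on $\{x_1,\dots,x_N\}$ as well. Setting $\Phi_i(B):=\M(\{x_i\},B)$---an operation by axiom (i)---then gives $\M(X,B)=\sum_{i=1}^N\CHI X(x_i)\Phi_i(B)$, and $\sum_i\Phi_i(\id_\ki)=\M(\Omega,\id_\ki)=\id_\hi$ confirms the normalization.

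I expect the real bookkeeping to sit in (1)---making the GNS form well defined and $\sigma$-additive and keeping the diagonalization honest---whereas the decisive step of (3) is the one-line linear-algebra bound $\dim_\C L^\infty(\mu)\le(\dim\hi)^2$ coming from injectivity of $\Lambda$. The only genuinely measure-theoretic subtlety in (3) is the passage from atoms to points, which is exactly where the second-countability hypothesis is consumed.
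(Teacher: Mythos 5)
The paper itself does not prove this theorem---it is imported from \cite{instru1}---but your proposal reconstructs essentially the argument of that reference (and of \cite{ChDASc} in the POVM case): the KSGNS/Stinespring dilation and the decomposable-operator perturbation criterion for parts (1)--(2), and for part (3) the injectivity of $f\mapsto Y^*(\id_\ki\otimes\hat f)Y$ on $L^\infty(\mu)$, which bounds $\dim_\C L^\infty(\mu)$ by $(\dim\hi)^2$, forces $\mu$ to be purely atomic with finitely many atoms, and identifies each atom with a point via a countable separating base. Your only slip is calling $(\Omega,\mathcal B(\Omega))$ ``standard Borel'' (second countability plus Hausdorffness does not give that), but since you use only that a countable base separates points and that singletons are Borel, the argument stands.
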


\section{Barycentric decompositions for instruments}\label{sec:bary}

From now on {\it we assume that $\Omega$ is a locally compact second countable Hausdorff space}, fix an 
$\M\in\In\big(\bo\Omega,\,\ki,\,\hi\big)$, and let $\mu$, $\hd$, and $Y$ be as in Theorem \ref{th2}. We let $\Mo^B=\M(\cdot,B)$ be the associated operator measures for all $B\in\lk$. Now $\hd$ is separable and, for all $f\in C_0(\Omega)$, $B\in\lk$, the operator integral
$$
L(f,\Mo^B):=\int_\Omega f(x)\d\Mo^B(x)=Y^*(B\otimes\hat f)Y\in\lh
$$
with the norm $\|L(f,\Mo^B)\|\le \|f\|_\infty\|B\|$. By denoting $\mu^{\M}_{\rho,B}(X):=\tr{\rho\Mo^B(X)}$, we get
$$
\tr{\rho L(f,\Mo^B)}=\tr{\rho Y^*(B\otimes\hat f)Y}=\int_\Omega f(x)\d \mu^{\M}_{\rho,B}(x)
$$
and $|\tr{\rho L(f,\Mo^B)}|\le\|\rho\|_1\|f\|_\infty\|B\|$ for all $\rho\in\th$. Hence, $|\mu^{\M}_{\rho,B}|(\Omega)\le \|\rho\|_1\|B\|$ and
$$
\th\times\lk\ni(\rho,B)\mapsto \mu^{\M}_{\rho,B}\in \mathrm{ca}(\Omega)
$$
is a bounded bilinear map; we say that a bilinear map $Q:\,\th\times\lk\to\mathrm{ca}(\Omega)$ is bounded if its norm 
\begin{eqnarray*}
\|Q\|_{\rm bil}&:=&\sup\big\{|Q(\rho,B)|(\Omega)\,\big|\,\|\rho\|_1\le 1,\;\|B\|\le 1\big\}\\
&=&
\sup\bigg\{\Big|\int_\Omega f(x) \d[Q(\rho,B)](x)\Big|\,\bigg|\,\|f\|_\infty\le1,\;\|\rho\|_1\le 1,\;\|B\|\le 1\bigg\}
\end{eqnarray*}
is finite. Note that $|Q(\rho,B)|(\Omega)\le\|Q\|_{\rm bil}\|\rho\|_1\|B\|$ and
$$
\Big|\int_\Omega f(x) \d[Q(\rho,B)](x)\Big|\le\|Q\|_{\rm bil}\|\rho\|_1\|f\|_\infty\|B\|
$$
for all $f\in C_0(\Omega)$, $\rho\in\th$, and $B\in\lk$. Clearly, such bounded bilinear maps form a Banach space \cite[E 2.3.9., p.\ 61]{Pedersen} whose unit ball\footnote{That is, the norm of its element $Q$ is $\|Q\|_{\rm bil}\le1$.} is denoted by $\rm{Ball}(\hi,\,\ki,\,\Omega)$. We may consider $\In\big(\bo\Omega,\,\ki,\,\hi\big)$ as a convex subset of $\rm{Ball}(\hi,\,\ki,\,\Omega)$. Moreover, any $Q\in\rm{Ball}(\hi,\,\ki,\,\Omega)$ can be viewed as an element of $\In_{\rm H}\big(\bo\Omega,\,\ki,\,\hi\big)$ if it satisfies the following conditions:
\begin{enumerate}

\item[(CP)] $\sum_{s,t=1}^n Q(\kb{\psi_t}{\psi_s},B_s^*B_t)\ge 0$ (i.e.\ the sum is a positive measure) for all $\psi_s\in\hi$, $B_s\in\lk$, and $s=1,2,\ldots,n\in\N$ (the CP-condition \cite{PeYl}),

\item[(NO)] $Q(\rho,\id_\ki)\in{\rm prob}(\Omega)$ for all $\rho\in\sh$ (normalization).\footnote{Cleary, the trace-preserving condition or unitality condition (2) follows from the fact that any trace-class operator can be written as a linear combination of four states (which are the normalized positive and negative parts of the real and imaginary parts).}

\end{enumerate}
Clearly now $Q$ determines a Heisenberg instrument $\M$ (i.e.\ the related linear map $B\mapsto\Mo^B$) via the formula 
$$
\tr{\rho L(f,\Mo^B)}=\int_\Omega f(x) \d[Q(\rho,B)](x),
$$
that is, $\mu^{\M}_{\rho,B}=Q(\rho,B)$, and we have $\|Q\|_{\rm bil}=1$ since $\mu^{\M}_{\rho,{\id_\ki}}(\Omega)=1$ for any state $\rho$. 

Recall that any 
$
{\rm ball}_ {\|\rho\|_1\|B\|}(\Omega)=\big\{\mu\in\mathrm{ca}(\Omega)\;\big|\;|\mu|(\Omega)\le  \|\rho\|_1\|B\|\big\}
 $ 
is weak${}^*$ compact and equip the cartesian product
\begin{equation}
\label{cart}
\prod_{(\rho,B)\in\th\times\lk}{\rm ball}_ {\|\rho\|_1\|B\|}(\Omega)
\end{equation}
with the product topology, that is, the coarsest topology for which all projections $Q\mapsto Q(\rho,B)$ are continuous. Hence, its net $(Q_i)_{i\in\mathcal I}$ converges to $Q$ if $\lim_{i\in\mathcal I}Q_i(\rho,B)=Q(\rho,B)$ for all $\rho\in\th$ and $B\in\lk$, that is, if
$$
\lim_{i\in\mathcal I}\int_\Omega f(x)\d[Q_i(\rho,B)](x)=\int_\Omega f(x)\d[Q(\rho,B)](x)
$$
for all $f\in C_0(\Omega)$, $\rho\in\th$ and $B\in\lk$. It follows from Tychonoff's theorem \cite{Pedersen} that the cartesian product is compact and, hence, its subset $\rm{Ball}(\hi,\,\ki,\,\Omega)$ is compact since it is closed: Indeed, if $(Q_i)_{i\in\mathcal I}\subseteq \rm{Ball}(\hi,\,\ki,\,\Omega)$ converges to $Q$ then, for all $c\in\C$, $\rho,\rho'\in\th$, and $B\in\lk$, one gets $Q(c\rho+\rho',B)=\lim_{i\in\mathcal I}Q_i(c\rho+\rho',B)=\lim_{i\in\mathcal I}[c Q_i(\rho,B)+Q_i(\rho',B)]=c \lim_{i\in\mathcal I} Q_i(\rho,B)+\lim_{i\in\mathcal I} Q_i(\rho',B)=c Q(\rho,B)+Q(\rho',B)
$ and similarly for the second argument showing bilinearity of $Q$. Automatically, $Q(\rho,B)\in{\rm ball}_ {\|\rho\|_1\|B\|}(\Omega)$, i.e.\ $|Q(\rho,B)|(\Omega)\le  \|\rho\|_1\|B\|$. By the just shown bilinearity, $Q$ has the norm  $\|Q\|_{\rm bil}\le1$, i.e.\ $Q\in\rm{Ball}(\hi,\,\ki,\,\Omega)$. To conclude, if we equip the vector space of bounded bilinear maps $Q:\,\th\times\lk\to\mathrm{ca}(\Omega)$ with the locally convex Hausdorff\footnote{Clearly, if $Q\ne0$ then there are $\rho$ and $B$ such that $Q(\rho,B)$ is not a zero measure; then there exists an $f$ such that $\|Q\|_{f,\rho,B}>0$. Hence, the seminorms separate $Q$'s.} topology generated by the seminorms
$$
Q\mapsto\|Q\|_{f,\rho,B}:=\left|\int_\Omega f(x)\d[Q(\rho,B)](x)\right|,\qquad
f\in C_0(\Omega),\;\rho\in\th,\;B\in\lk,
$$
then its topological subspace $\rm{Ball}(\hi,\,\ki,\,\Omega)$ is compact. 

Of course, next we {\it would} like to show that its subspace $\In\big(\bo\Omega,\,\ki,\,\hi\big)$ is closed and thus compact, but this does not hold in general: from Remark \ref{Remark1} we see that, in the case $\Omega=\R$, the sequence of instruments $\M_n(X,B)=\delta_n(X)\Phi_n(B)$ (where $\Phi_n:\,\lk\to\lh$ are channels) converges to the zero map. Another problem is that the set of those bounded bilinear forms $Q$ such that $B\mapsto [Q(\rho,B)](X)$ is normal\footnote{If it is normal then there exists an $\mathcal I(X,\rho)\in\tk$ such that $[Q(\rho,B)](X)=\tr{\mathcal I(X,\rho)B}$ for all $B\in\lk$.} for all $\rho\in\th$ and $X\in\Sigma$ is not typically closed. The CP-condition (CP) above is not a problem, since if $\sum_{s,t=1}^n Q_i(\kb{\psi_t}{\psi_s},B_s^*B_t)\ge 0$ for all $i\in\mathcal I$ then the limit $\sum_{s,t=1}^n Q(\kb{\psi_t}{\psi_s},B_s^*B_t)\ge 0$ also.\footnote{Since the converging net of nonnegative measures/numbers must converge to the nonnegative measure/number.} So the problem (NO) is that ${\rm prob}(\Omega)$ is not compact (unless $\Omega$ is compact). To overcome this difficulty, we replace $\Omega$ with its compactification $\ov\Omega$ everywhere in the above calculations and replace, in the cartesian product \eqref{cart}, the spaces ${\rm ball}_ {\|\rho\|_1\|\id_\ki\|}(\ov\Omega)$, $\rho\in\sh$, with the same compact space ${\rm prob}(\ov\Omega)$; in this way, we immediately obtain a compact cartesian product space whose convex subset $\In_{\rm H}\big(\bo{\ov\Omega},\,\ki,\,\hi\big)$ of Heisenberg channels is compact. Moreover, one can view $\In_{\rm H}\big(\bo{\Omega},\,\ki,\,\hi\big)$ as a convex subset of $\In_{\rm H}\big(\bo{\ov\Omega},\,\ki,\,\hi\big)$ by extending each $\M\in\In_{\rm H}\big(\bo{\Omega},\,\ki,\,\hi\big)$ via $\M(\{\infty\},B):=0$ for all $B\in\lk$. Hence, 
$$
\In_{\rm H}\big(\bo{\Omega},\,\ki,\,\hi\big)=\big\{\M\in\In_{\rm H}\big(\bo{\ov\Omega},\,\ki,\,\hi\big)\,\big|\,\mu^{\M}_{\rho,{\id_\ki}}(\{\infty\})=0\big\}
$$
where (a fixed) $\rho\in\sh$ has only positive eigenvalues; note that $\mu^{\M}_{\rho,{\id_\ki}}\in\mathrm{prob}(\Omega)\subseteq{\rm prob}(\ov\Omega)$. We next show that $\In_{\rm H}\big(\mc B(\Omega),\hi,\ki\big)$ is measurable; compare this result to Lemma 6 of \cite{ChDASc}.

\begin{lemma}\label{lemma:Gdelta}
The set $\In_{\rm H}\big(\mc B(\Omega),\hi,\ki\big)$ is a $G_\delta$-subset of the set of bounded bilinear forms $Q:\th\times\lk\to{\rm ca}(\Omega)$ and, as such, Borel measurable.
\end{lemma}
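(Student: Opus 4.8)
The plan is to present $\In_{\rm H}\big(\bo\Omega,\hi,\ki\big)$ as a countable intersection of open subsets of the ambient space of bounded bilinear forms $Q:\th\times\lk\to\mathrm{ca}(\Omega)$. Recall from the discussion preceding the lemma that a bounded bilinear form $Q$ represents a Heisenberg instrument exactly when it satisfies the complete–positivity condition (CP) and the normalization condition (NO), and that (CP) together with (NO) already forces $\|Q\|_{\rm bil}=1$, so no separate norm constraint has to be imposed. I would therefore write $\In_{\rm H}\big(\bo\Omega,\hi,\ki\big)=\mathcal C\cap\mathcal N$, where $\mathcal C$ collects the $Q$ obeying (CP) and $\mathcal N$ those obeying (NO), and prove that each of $\mathcal C$ and $\mathcal N$ is a $G_\delta$ set. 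Since a countable intersection of $G_\delta$ sets is again $G_\delta$, and every $G_\delta$ set is Borel, this yields the claim.

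For $\mathcal C$, the guiding observation is that every defining condition is of the form ``a fixed continuous functional of $Q$ is $\ge0$'', and such a set is automatically $G_\delta$: if $\Lambda$ is continuous and real valued then $\{\Lambda\ge0\}=\bigcap_{m=1}^\infty\{\Lambda>-1/m\}$ is a countable intersection of open sets. Concretely, for each $f\in C_0(\Omega)$ with $f\ge0$ the map $Q\mapsto\int_\Omega f\,\d\big[\sum_{s,t}Q(\kb{\psi_t}{\psi_s},B_s^*B_t)\big]$ is continuous, and positivity of the measure $\sum_{s,t}Q(\kb{\psi_t}{\psi_s},B_s^*B_t)$ is equivalent to nonnegativity of all these integrals. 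It then remains to replace the uncountably many test data by a countable family: the vectors $\psi_s$ and the functions $f$ may be taken from fixed countable dense subsets of $\hi$ and of the positive cone of $C_0(\Omega)$ (using boundedness of $Q$ and the separability guaranteed by second countability of $\Omega$). The operators $B_s$ are the real difficulty, since $\lk$ is not norm separable; I would circumvent this by recalling that complete positivity of the operation $B\mapsto\M(\cdot,B)$ is equivalent to positivity of its Choi form, so that it is enough to test (CP) with the $B_s$ drawn from the countable set of matrix units $\kb{e_i}{e_j}$ of a fixed orthonormal basis $\{e_i\}_i$ of $\ki$. This exhibits $\mathcal C$ as a countable intersection of $G_\delta$ sets, hence $G_\delta$.

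The condition $\mathcal N$ is where the genuine obstacle lies, and it is the reason the set is only $G_\delta$ and not closed. Its positivity half, $Q(\rho,\id_\ki)\ge0$, is handled exactly as in $\mathcal C$. The problematic half is the mass constraint $Q(\rho,\id_\ki)(\Omega)=1$: because $\Omega$ is non-compact, $\CHI\Omega\notin C_0(\Omega)$, total mass is not a continuous functional, and $\mathrm{prob}(\Omega)$ fails to be compact (Remark \ref{Remark1}). The idea is to recover the mass as a monotone limit of honest continuous integrals. Fixing an exhaustion of $\Omega$ by compact sets, Urysohn's lemma supplies $h_k\in C_0(\Omega)$ with $0\le h_k\le1$ and $h_k\uparrow1$ pointwise (equivalently, passing to $\ov\Omega$, functions $g_k\downarrow\CHI{\{\infty\}}$ detecting the mass that leaks to infinity). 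For a positive measure the monotone convergence theorem gives $Q(\rho,\id_\ki)(\Omega)=\sup_k\int_\Omega h_k\,\d[Q(\rho,\id_\ki)]$, a supremum of functionals continuous in $Q$, hence lower semicontinuous. The delicate half $Q(\rho,\id_\ki)(\Omega)\ge1$ then reads
\begin{equation*}
\bigcap_{m=1}^\infty\bigcup_{k=1}^\infty\Big\{Q:\textstyle\int_\Omega h_k\,\d[Q(\rho,\id_\ki)]>1-\tfrac1m\Big\},
\end{equation*}
a $G_\delta$ set, while the complementary bound $Q(\rho,\id_\ki)(\Omega)\le1$ is closed and likewise $G_\delta$.

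Finally I would eliminate the remaining uncountable quantifier over $\rho\in\sh$. As $Q$ is a bounded bilinear form, $\rho\mapsto Q(\rho,\id_\ki)$ is continuous from the trace norm into the total-variation norm, and both positivity and total mass survive total-variation limits; hence it suffices to demand (NO) only for $\rho$ in a countable trace-norm dense subset $\{\rho_n\}\subseteq\sh$, which exists since $\th$ is separable. Intersecting the corresponding countably many $G_\delta$ sets keeps $\mathcal N$ a $G_\delta$ set, and $\mathcal C\cap\mathcal N$ is therefore $G_\delta$ and Borel. The main obstacle, as flagged above, is the normalization: it is intrinsically non-closed because of the mass that can escape to infinity, and the whole content of the argument is that this escape can nevertheless be captured by a countable intersection of open conditions — which is precisely what a $G_\delta$ description furnishes.
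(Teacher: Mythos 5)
Your treatment of the normalization condition --- the part of the lemma that carries the real content --- uses exactly the mechanism of the paper's proof: the total mass of $Q(\rho,\id_\ki)$ on the non-compact $\Omega$ is a supremum of integrals against functions $f\in C_0(\Omega)$ with $0\le f\le\CHI\Omega$, hence a lower semicontinuous function of $Q$, so the sets where this mass exceeds $1-\tfrac1n$ are open and their intersection is a $G_\delta$. The paper phrases this relatively: it works inside $\ov\In=\In_{\rm H}\big(\bo{\ov\Omega},\,\ki,\,\hi\big)$, whose compactness was already established in the preceding discussion (over $\ov\Omega$ the set $\mathrm{prob}(\ov\Omega)$ is compact, so the (CP) and (NO) conditions are simply closed there), and then only shows that the physical instruments, those with $\M(\{\infty\},\id_\ki)=0$, form the countable intersection $\bigcap_n G_n$ of the relatively open sets $G_n=\big\{\M\,\big|\,\M(\{\infty\},\id_\ki)<\tfrac1n\id_\hi\big\}$ by proving each $G_n^c$ closed under net convergence. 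That relative statement is all that is used later (Borel measurability of $\In$ inside $\ov\In$ so that $p_\M(\In)$ makes sense). You instead verify every defining condition from scratch in the ambient space of bilinear forms over $\Omega$, which is more ambitious.

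That ambition is where the one genuine gap lies: the countable reduction of (CP) in the second argument. At the point of the lemma $\ki$ is only separable and the operations $B\mapsto\M(X,B)$ are \emph{not} assumed normal, so the Choi-type equivalence you invoke is not available. Testing $\sum_{s,t}Q(\kb{\psi_t}{\psi_s},B_s^*B_t)\ge0$ only for $B_s$ in the span of the matrix units $\kb{e_i}{e_j}$, and using the norm-continuity of $Q$ in its second argument, lets you pass to norm limits of finite-rank operators --- but these exhaust only the compact operators, not $\lk$, and without normality there is no limiting procedure that $Q$ is known to respect which reaches a general $B_s\in\lk$. Hence your $\mathcal C$, cut out by countably many conditions, may be strictly larger than the true (CP) set; the latter is an \emph{uncountable} intersection of closed sets, i.e.\ merely closed, and closed does not imply $G_\delta$ in this non-metrizable ambient space. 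The gap vanishes when $\dim\ki<\infty$ (the only case in which the lemma is applied), and it is avoided entirely by the paper's route of proving closedness of $\ov\In$ once and reducing the lemma to the mass-at-infinity condition. Your remaining countable reductions (dense $\psi_s$ in $\hi$, dense positive $f$ in $C_0(\Omega)$, trace-norm dense states $\rho$) are all legitimate.
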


\begin{proof}
Define the sets
$$
G_n:=\left\{\M\in\In_{\rm H}\big(\mc B(\overline{\Omega}),\hi,\ki\big)\,\middle|\,\M(\{\infty\},\id_\ki)<\frac{1}{n}\id_\hi\right\}
$$
for $n\in\N$. Let $(\M_i)_{i\in\mc I}$ be a net in the complement $G_n^c$ of $G_n$ converging to $\M$ which we already know to be a Heisenberg channel on $\overline{\Omega}$. Clearly, the claim follows by proving that $\M\in G_n^c$ which we now go on to show. Using the fact that, for all $i\in\mc I$ and $X\in\mc B(\Omega)$, the spectrum of $\M_i(X,\id_\ki)$ contains elements at most $1-1/n$, we find a positive $\rho\in\th$ such that $\tr{\rho}=1$ so that
$$
\int_\Omega f\,\d\mu^{\M_i}_{\rho,\id_\ki}\leq 1-\frac{1}{n},\qquad
f\in C_0(\Omega),\quad 0\leq f\leq\chi_\Omega
$$
Thus,
\begin{align*}
\tr{\rho\M_i(\Omega,\id_\ki)}=&\sup\left\{\int_\Omega f\,\d\mu^{\M_i}_{\rho,\id_\ki}\,\middle|\,f\in C_0(\Omega),\; 0\leq f\leq\chi_\Omega\right\}\leq 1-\frac{1}{n}.
\end{align*}
Using the convergence, this implies $\int_\Omega f\,\d\mu^{\M}_{\rho,\id_\ki}\leq 1-1/n$ for all $f\in C_0(\Omega)$ such that $0\leq f\leq\chi_\Omega$. This means that $\tr{\rho\M(\Omega,\id_\ki)}\leq 1-1/n$, i.e.,\ $\tr{\rho\M(\{\infty\},\id_\ki)}\geq 1/n$. Thus, the spectrum of $\M(\{\infty\},\id_\ki)$ contains elements at least $1/n$, i.e.,\ $\M\in G_n^c$.
\end{proof}

In conclusion,
\begin{itemize}
\item the topology of the locally convex Hausdorff space ${\rm Bil}\big(\th,\lk;\mathrm{ca}(\ov\Omega)\big)$ of bounded bilinear maps $Q:\,\th\times\lk\to\mathrm{ca}(\ov\Omega)$ is generated by the seminorms
$$
Q\mapsto\|Q\|_{g,\rho,B}:=\left|\int_\Omega g(x)\d[Q(\rho,B)](x)\right|,\qquad
g\in C(\ov\Omega),\;\rho\in\th,\;B\in\lk.
$$
\item $\In_{\rm H}\big(\bo{\ov\Omega},\,\ki,\,\hi\big)$ is its compact convex subset and we denote the Borel $\sigma$-algebra of its (subspace) topology by $\bo{\ov\In}$.
\item $\In_{\rm H}\big(\bo{\Omega},\,\ki,\,\hi\big)$ is its measurable convex subset.\footnote{Note that the sets $\ext\In_{\rm H}\big(\bo{\Omega},\,\ki,\,\hi\big)$ and $\ext\In_{\rm H}\big(\bo{\ov\Omega},\,\ki,\,\hi\big)$ are not necessarily measurable.}
\item A converging net $i\mapsto \M_i$ of $\In_{\rm H}\big(\bo{\Omega},\,\ki,\,\hi\big)$ may converge to (an unphysical instrument) $\M\in\In_{\rm H}\big(\bo{\ov\Omega},\,\ki,\,\hi\big)$; the convergence means that
\begin{eqnarray*}
\lim_{i\in\mathcal I}\int_\Omega g(x)\tr{\rho\M_i(\d x,B)}&=&\int_{\Omega\cup\{\infty\}} g(x)\tr{\rho\M(\d x,B)} \\
&=&\int_{\Omega} g(x)\tr{\rho\M(\d x,B)}+ g(\infty)\tr{\rho\M(\{\infty\},B)}
\end{eqnarray*}
for all $g\in C(\ov\Omega)$, $\rho\in\th$, and $B\in\lk$.\footnote{For example, take a sequence of channels $\Phi_n$ which converges to a CP channel $\Phi$ (i.e.\ $\lim_{n\to\infty}\tr{\rho \Phi_n(B)}\equiv \tr{\rho \Phi(B)}$). Then the instruments $\M_n(X,B)=\delta_n(X)\Phi_n(B)$ converge to an unphysical instrument $\M(X,B)=\delta_\infty(X)\Phi(B)$, $X\in\bo{\R\cup\{\infty\}}$, $B\in\lk$.}
\item For all $k=1,\ldots,N$, let $\M_k\in\In_{\rm H}\big(\bo{\ov\Omega},\,\ki,\,\hi\big)$ and $\la_k> 0$ be such that $\sum_{k=1}^N\la_k=1$. Denote $\M=\sum_{k=1}^N\la_k\M_k\in\In_{\rm H}\big(\bo{\ov\Omega},\,\ki,\,\hi\big)$. Then
\begin{itemize}
\item $\{\M_k\}_{k=1}^N\subset\In_{\rm H}\big(\bo{\Omega},\,\ki,\,\hi\big)$ yields $\M\in\In_{\rm H}\big(\bo{\Omega},\,\ki,\,\hi\big)$,
\item if some $\M_k\notin\In_{\rm H}\big(\bo{\Omega},\,\ki,\,\hi\big)$ then $\M\notin\In_{\rm H}\big(\bo{\Omega},\,\ki,\,\hi\big)$.
\end{itemize}
\item $\ext\In_{\rm H}\big(\bo{\Omega},\,\ki,\,\hi\big)\subseteq\ext\In_{\rm H}\big(\bo{\ov\Omega},\,\ki,\,\hi\big)$.
\item $\ext\In_{\rm H}\big(\bo{\Omega},\,\ki,\,\hi\big)=\In_{\rm H}\big(\bo{\Omega},\,\ki,\,\hi\big)\cap\ext\In_{\rm H}\big(\bo{\ov\Omega},\,\ki,\,\hi\big)$.
\end{itemize}
The Choquet--Bishop--de Leeuw theorem says that {\it any $\M\in\In_{\rm H}\big(\bo{\ov\Omega},\,\ki,\,\hi\big)=:\ov\In$ can be represented by a probability boundary measure} \cite[Theorem I.4.8, p.\ 36]{Alfsen}.
This means that there is a (possibly nonunique) probability measure $p_\M:\,\bo{\ov\In}\to[0,1]$ such that its support is a subset of the {\it closure} of $\ext\In_{\rm H}\big(\bo{\ov\Omega},\,\ki,\,\hi\big)$, and
$
\M=\int_{\ov\In} \hE\d p_\M(\hE)
$ 
that is, $\M$ is the {\it barycenter} of $p_\M$; the above (weak) integral means that
$F(\M)=\int_{\ov\In} F(\hE)\d p_\M(\hE)$ for all continuous (real)\footnote{Any complex (topological) vector space is trivially a real (topological) vector space, any complex linear function is real linear, and the real and imaginary parts of a complex valued continuous function are real linear and continuous.} linear (real valued) functions $F$ on ${\rm Bil}\big(\th,\lk;\mathrm{ca}(\ov\Omega)\big)$.
Especially,
$$
\int_{\ov\Omega} g(x)\tr{\rho\M(\d x,B)}=
\int_{\ov\In}\int_{\ov\Omega} g(x)\tr{\rho\hE(\d x,B)}\d p_\M(\hE)
$$
for all $g\in C(\ov\Omega)$, $\rho\in\th$, and $B\in\lk$.

The closure of the set of extreme points of the set of instruments can be `huge', so the above result might not be very useful: if  $\M$ belongs to the closure  then one can trivially choose $p_\M$ to be the Dirac measure $\delta_\M$ at $\M$. In order to apply the more usable Choquet theorem which associates to a (quantum) instrument a probability measure which is supported already by the set of extreme points, not its closure, we have to show that $\In\big(\mc B(\Omega),\ki,\hi\big)$ is metrizable with respect to the subspace topology defined above. We do this by demonstrating that the subspace topology is defined by a countable number of seminorms.

Since $\hi$ is separable, $\th$ is separable (w.r.t.\ the trace-norm). If $(A_i)_{i\in\mathcal I}\subseteq\lh$ is norm-bounded (i.e.\ $\sup_{i\in\mathcal I}\|A_i\|<\infty$) and $A\in\lh$ then $\lim_{i\in\mathcal I}\<h_m|A_i h_n\>=\<h_m|A h_n\>$ for all $m,\,n$ if and only if $\lim_{i\in\mathcal I}\tr{\rho A_i}=\tr{\rho A}$ for all $\rho\in\th$; here $\{h_n\}_{n=1}^{\dim\hi}$ is an orthonormal basis of $\hi$. Similarly, since $C(\ov\Omega)$ is separable having a dense sequence $\{g_k\}_{k=1}^\infty$, one gets, for $(\mu_i)_{i\in\mathcal I}\subseteq\mathrm{ball}_r(\ov\Omega)$ and $\mu\in\mathrm{ball}_r(\ov\Omega)$, that $\lim_{i\in\mathcal I}\int_{\ov\Omega}g_k(x)\d\mu_i(x)=\int_{\ov\Omega}g_k(x)\d\mu(x)$ for all $k$ if and only if $\lim_{i\in\mathcal I}\int_{\ov\Omega}g(x)\d\mu_i(x)=\int_{\ov\Omega}g(x)\d\mu(x)$ for all $g\in C(\ov\Omega)$. These facts show that $\lim_{i\in\mathcal I}\M_i=\M$ if
$$
\lim_{i\in\mathcal I}\<h_m|L(g_k,\Mo_i^B)h_n\>=\<h_m|L(g_k,\Mo_i^B)h_n\>
$$
for all $m,\,n,\,k$ and $B\in\lk$.\footnote{Note that $\sup_{i\in\mathcal I}\|L(g_k,\Mo_i^B)\|\le\|g_k\|_\infty\|B\|<\infty$.}

Suppose then that the `output space' $\ki$ is {\it finite dimensional} and $\{k_s\}_{s=1}^{\dim\ki}$ is its orthonormal basis.\footnote{Note that if $\dim\ki=\infty$ then $\lk$ is not separable w.r.t.\ the operator norm.} Now all  Heisenberg instruments are actually  quantum instruments since all the relevant topologies of the now finite-dimensional vector space $\lk$, including the ultraweak topology, coincide with the Euclidean topology. This means that
\begin{eqnarray*}
\ov\In &:= \In_{\rm H}\big(\mc B(\overline{\Omega}),\ki,\hi\big)&=\In\big(\mc B(\overline{\Omega}),\ki,\hi\big),\\
\In &:=\In_{\rm H}\big(\mc B(\Omega),\ki,\hi\big)&=\In\big(\mc B(\Omega),\ki,\hi\big).
\end{eqnarray*}
We denote the sets of extreme points of the above sets by $\ext\ov\In$ and $\ext\In$ respectively. By writing $B=\sum_{s,t=1}^{\dim\ki}B_{st}\kb{k_s}{k_t}$ (where $B_{st}:=\<k_s|Bk_t\>$) one gets
$$
\lim_{i\in\mathcal I}\<h_m|L(g_k,\Mo_i^B)h_n\>
=\sum_{s,t=1}^{\dim\ki}B_{st}\;
\lim_{i\in\mathcal I}\<h_m|L(g_k,\Mo_i^{\kb{k_s}{k_t}})h_n\>
$$
so already the countable number of seminorms $\|\;\cdot\;\|_{g_k,\kb{h_n}{h_m},\kb{k_s}{k_t}}$ determine the topology of $\In\big(\bo{\ov\Omega},\,\ki,\,\hi\big)$, that is, it is pseudometrizable. Since it is also Hausdorff, $\In\big(\bo{\ov\Omega},\,\ki,\,\hi\big)$ is a metrizable (compact convex) set and we may apply the Choquet theorem.

\begin{theorem}\label{thm:bary}
Suppose that $\Omega$ is a locally compact second countable Hausdorff space, $\ki$ a finite dimensional Hilbert space, $\hi$ a separable Hilbert space, and $\M\in\ov\In$. Then $\ext\ov\In$ is measurable and there exists a probability measure $p_\M:\,\mc B(\ov\In)\to[0,1]$ such that $p_\M(\ext\ov\In)=1$ and
\begin{equation}\label{xyz}
\int_{\ov\Omega} g(x)\tr{\rho\M(\d x,B)}=
\int_{\ov\In}\int_{\ov\Omega} g(x)\tr{\rho\hE(\d x,B)}\d p_\M(\hE)
\end{equation}
for all $g\in C(\ov\Omega)$, $\rho\in\th$, and $B\in\lk$. If $\M\in\In$, i.e.,\ $\M$ is a physical instrument, then $p_\M(\ext\In)=1$, so that
$$
\int_\Omega f(x)\tr{\rho\M(\d x,B)}=\int_{\ext\In}\int_\Omega f(x)\tr{\rho\mc E(\d x,B)}\d p_\M(\mc E)
$$
for all $f\in C_0(\Omega)$, $\rho\in\th$, and $B\in\lk$.
\end{theorem}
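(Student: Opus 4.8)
The plan is to realize $\M$ as the barycenter of a Choquet probability measure on the now-metrizable compact convex set $\ov\In$, and then, in the physical case, to show that this measure cannot charge the part of $\ov\In$ that has leaked mass to the point at infinity. First I would record the measurability of the extreme boundary. Since $\ov\In$ is a \emph{metrizable} compact convex subset of the locally convex Hausdorff space ${\rm Bil}\big(\th,\lk;\mathrm{ca}(\ov\Omega)\big)$, the classical fact that the extreme points of a metrizable compact convex set form a $G_\delta$ applies: for a compatible metric $d$ the sets $F_n:=\big\{\tfrac12(\hE_1+\hE_2)\,\big|\,\hE_1,\hE_2\in\ov\In,\ d(\hE_1,\hE_2)\ge 1/n\big\}$ are compact (continuous images of closed subsets of $\ov\In\times\ov\In$), and $\ext\ov\In=\ov\In\setminus\bigcup_n F_n$ is therefore Borel. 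The metrizable version of Choquet's theorem \cite{Alfsen} then supplies a probability measure $p_\M$ on $\bo{\ov\In}$ with $p_\M(\ext\ov\In)=1$ whose barycenter is $\M$, i.e.\ $F(\M)=\int_{\ov\In}F\,\d p_\M$ for every continuous real-linear functional $F$. Applying this to the real and imaginary parts of the functionals $\hE\mapsto\int_{\ov\Omega}g(x)\tr{\rho\hE(\d x,B)}$, which are continuous by the very definition of the seminorm topology and linear in $\hE$, yields \eqref{xyz}.

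The substantial step is the physical case. I would fix once and for all a faithful state $\rho$, so that (as recorded before the theorem) $\In=\big\{\hE\in\ov\In\,\big|\,\mu^{\hE}_{\rho,\id_\ki}(\{\infty\})=0\big\}=\big\{\hE\in\ov\In\,\big|\,\mu^{\hE}_{\rho,\id_\ki}(\Omega)=1\big\}$, the last equality because every $\hE\in\ov\In$ satisfies $\mu^{\hE}_{\rho,\id_\ki}(\ov\Omega)=\tr{\rho\,\id_\hi}=1$. The obstruction is that $\hE\mapsto\mu^{\hE}_{\rho,\id_\ki}(\Omega)$ is \emph{not} continuous, since evaluation against $\chi_\Omega$ is forbidden ($\chi_\Omega\notin C(\ov\Omega)$), so the barycenter identity cannot be applied to it directly. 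I would bypass this by inner regularity: using a compact exhaustion of the locally compact second countable $\Omega$ together with Urysohn functions, choose a single increasing sequence $f_n\in C_0(\Omega)$ with $0\le f_n\le\chi_\Omega$ and $f_n\uparrow\chi_\Omega$ pointwise (the same sequence serving all $\hE$ at once). Each $f_n$, extended by $f_n(\infty)=0$, lies in $C(\ov\Omega)$, so \eqref{xyz} with $g=f_n$, $B=\id_\ki$ reads $\int_\Omega f_n\,\d\mu^{\M}_{\rho,\id_\ki}=\int_{\ov\In}\big(\int_\Omega f_n\,\d\mu^{\hE}_{\rho,\id_\ki}\big)\d p_\M(\hE)$. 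Letting $n\to\infty$ and applying monotone convergence on both sides (on the left because $\M\in\In$ forces $\mu^{\M}_{\rho,\id_\ki}(\Omega)=1$; on the right to the inner and then the outer integral) gives
$$
\int_{\ov\In}\mu^{\hE}_{\rho,\id_\ki}(\Omega)\,\d p_\M(\hE)=\mu^{\M}_{\rho,\id_\ki}(\Omega)=1.
$$
Because $0\le\mu^{\hE}_{\rho,\id_\ki}(\Omega)\le 1$ for all $\hE$ and $p_\M$ is a probability measure, the integrand must equal $1$ for $p_\M$-almost every $\hE$; that is, $p_\M(\In)=1$, whence $p_\M(\ext\In)=p_\M\big(\In\cap\ext\ov\In\big)=1$.

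Finally I would harvest the last displayed identity. For $f\in C_0(\Omega)$, extended by $f(\infty)=0$, the one-point contribution drops out of every integral over $\ov\Omega$, so $\int_{\ov\Omega}f\,\d\mu^{\hE}_{\rho,B}=\int_\Omega f\,\d\mu^{\hE}_{\rho,B}$ for every $\hE$ and likewise for $\M$; combining this with $p_\M(\ext\In)=1$ to replace the domain $\ov\In$ by $\ext\In$ in \eqref{xyz} produces exactly the stated formula over $\Omega$ and $\ext\In$. The genuine difficulty of the whole argument is concentrated in the semicontinuity/monotone-approximation trick of the middle paragraph --- the analogue of Lemma~6 of \cite{ChDASc2} --- which is needed precisely because mass can escape to the point at infinity and the naive evaluation functional there fails to be continuous; everything else is a direct invocation of Choquet theory on the metrizable compact convex set established above.
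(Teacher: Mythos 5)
Your proof is correct and follows essentially the same route as the paper: metrizability of $\ov\In$ plus Choquet's theorem for the first half, then fixing a faithful state, setting $B=\id_\ki$, and approximating $\chi_\Omega$ from below by $C_0(\Omega)$ functions to show that $p_\M$ cannot charge $\ov\In\setminus\In$. The only (immaterial) difference is that you run this last step via monotone convergence applied to the measurable function $\hE\mapsto\mu^{\hE}_{\rho,\id_\ki}(\Omega)$, whereas the paper takes a supremum over $0\le f\le\CHI\Omega$ and shows directly that the closed sets $\ov\In_n=\{\hE\mid\mu^{\hE}_{\rho,\id_\ki}(\Omega)\le 1-n^{-1}\}$ are $p_\M$-null.
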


\begin{proof}
The first half of the claim follows directly from the fact demonstrated above that $\In\big(\bo{\ov\Omega},\,\ki,\,\hi\big)$ is a metrizable, convex, and compact and the Choquet theorem \cite[Corollary I.4.9, p.\ 36]{Alfsen}. The last claim follows immediately from $\ext\In=\In\cap\ext\ov\In$ by proving that $p_\M(\In)=1$. Let us show this. Suppose that $\M\in\In\big(\bo{\Omega},\,\ki,\,\hi\big)$ and put $B=\id_\ki$ and a state $\rho$ with positive eigenvalues in \eqref{xyz} to get
$$
\int_{\ov\Omega} g(x)\d\mu^{\M}_{\rho,\id_\ki}(x)=
\int_{\ov\In}\int_{\ov\Omega} g(x)\d\mu^{\hE}_{\rho,\id_\ki}(x)\d p_\M(\hE).
$$
Since the set $\ov\In_n:=\big\{\hE\in\ov\In\,\big|\,\mu^{\hE}_{\rho,\id_\ki}(\Omega)\le1-n^{-1}\big\}$ is closed (see the proof of Lemma \ref{lemma:Gdelta}) one can write, for any $f\in C_0(\Omega)$, $0\le f\le\CHI\Omega$,
\begin{eqnarray*}
\int_{\ov\Omega} f(x)\d\mu^{\M}_{\rho,\id_\ki}(x)&=&
\int_{\ov\In_n}\int_{\ov\Omega} f(x)\d\mu^{\hE}_{\rho,\id_\ki}(x)\d p_\M(\hE)+
\int_{\ov\In\setminus\ov\In_n}\int_{\ov\Omega} f(x)\d\mu^{\hE}_{\rho,\id_\ki}(x)\d p_\M(\hE) \\
&\le&\left(1-\frac1n\right)p_\M(\ov\In_n)+p_\M(\ov\In\setminus\ov\In_n)=1-\frac1n p_\M(\ov\In_n).
\end{eqnarray*}
Hence,
$$
1=\mu^{\M}_{\rho,\id_\ki}(\Omega)=\sup\left\{\int_\Omega f(x)\d\mu^{\M}_{\rho,\id_\ki}(x)\,\Big|\,f\in C_0(\Omega),\;0\le f\le\CHI\Omega\right\}\le1-\frac1n p_\M(\In_n),
$$
that is, $p_\M(\ov\In_n)=0$ for all $n=1,2,\ldots,$ showing that $p_\M(\In)=1$. 
\end{proof}

\begin{example}\rm
In the context of the Theorem \ref{thm:bary} we get, for example:
\begin{itemize}
\item(Output space $\ki=\C$.) For any POVM $\Mo:\,\bo\Omega\to\lh$ we have 
$$
\int_\Omega f(x)\d\Mo(x)=\int_{\rm POVMs} \int_\Omega f(x)\d\sfe(x)\d p_\Mo(\sfe), \qquad f\in C_0(\Omega), 
$$
where $p_\Mo$ is supported by extreme POVMs. Since we may have $\dim\hi=\infty$ this result is a generalization of \cite[Theorem 5]{Ali} and \cite[Corollary 4]{ChDASc2}.
\item(Value space $\Omega=\{1\}$.) For any quantum channel $\Phi:\,\lk\to\lh$ we have $\Phi(B)=\int_{\rm Channels}\Theta(B)\d p_\Phi(\Theta)$, $B\in\lk$, where $p_\Phi$ is supported by extreme channels. In the Schr\"odinger picture, $\Phi_*(\rho)=\int_{\rm Channels}\Theta_*(\rho)\d p_\Phi(\Theta)$ for all $\rho\in\th$. Recall that we require that $\dim\ki<\infty$. For example, if $\dim\hi\le \dim\ki$ then any isometry $J:\,\hi\to\ki$, $J^*J=\id_\hi$, defines an isometry channel $\Theta_J(B):=J^*BJ$ which is extreme \cite{instru1}. Hence, a random isometry channel $\Phi(B)=\int_{\rm Isometries}\Theta_J(B)\d p_\Phi(\Theta_J)$ can be viewed as a barycenter. Especially in the case $\dim\hi=\dim\ki$ we get the random unitary channels.
\item (Value space $\Omega=\{1\}$ and $\hi=\C$.) Now any $\Phi_*$ is just a state $\sigma\in\mathcal S(\ki)$ so we get $\sigma=\int_{\rm Pure}\kb{\psi}{\psi}\d p_\sigma(\kb{\psi}{\psi})$. This shows that $p_\sigma$ is not unique since it can always be replaced by a discrete probability measure which gives the eigendecomposition $\sigma=\sum_i \kb{\psi_i}{\psi_i}p_i$, where $\sum_i p_i=1$, $p_i\ge 0$. 
\item (Input and output spaces are $\C$.) POVMs are now probability measures $p$ and extreme probability measures are Dirac measures $\delta_x$, i.e.\ points $x$. One can write $\int_\Omega f(x)\d p(x)=\int_{\Omega}\int_\Omega f(y)\d\delta_x(y)\d\tilde p(\delta_x)=\int_\Omega f(x)\d\tilde p(\delta_x)$ so one can identify $\d\tilde p(\delta_x)$ with $\d p(x)$.
\end{itemize}
\end{example}

\subsection{Example: Qubit effects}

Assume as above $\ki=\C$ and also $\Omega=\{0,1\}$ so we are dealing with effects $E$ of $\hi$ (now the binary POVM is $\Mo(\{1\})=E$, $\Mo(\{0\})=\id_\hi-E$). We have $E=\int_{\rm Projections} P\d p_E(P)$. Let us check the qubit case $\hi=\C^2$: Any effect $E$ can be written in the form
\begin{equation}\label{E}
E=\frac12\sum_{\mu=0}^3 e^\mu\sigma_\mu= \frac12(e^0\sigma_0+\ve\cdot\vsigma)
=\frac12\begin{pmatrix}
e^0+e^3 & e^1-i e^2 \\
e^1+i e^2 & e^0-e^3
\end{pmatrix}
\end{equation}
where $(e^0,e^1,e^2,e^3)=(e^0,\ve)\in \R^4$, $\|\ve\|:=\sqrt{(e^1)^2+(e^2)^2+(e^3)^2}\le\min\{e^0,2-e^0\}$
\cite{Measurement}. In particular, $e^\mu=\tr{E\sigma_\mu}$, $\mu=0,1,2,3$, $e^0\in[0,2]$, and $|e^j|\le\|\ve\|\le1$, $j=1,2,3$. The eigenvalues of $E$ are $$\lambda^E_\pm:=\frac{1}{2}(e^0\pm\|\ve\|)\in[0,1]$$ so that $E$ is of rank 1 if and only if $e^0=\|\ve\|\ne0$. Especially, $E$ is a rank-1 (resp.\ rank-2) projection exactly when $e^0=\|\ve\|=1$ (resp.\ $e^0=2$ and $\|\ve\|=0$, i.e.\ $E=\sigma_0$). Denote by $\mc P$ the set of projections on $\C^2$ and by $\mc P_1$ the subset of rank-1 projections. Hence, define $\vp(\theta,\fii):=(\cos\fii\sin\theta,\,\sin\fii\sin\theta,\,\cos\theta)$ to get
\begin{eqnarray*}
E&=&\int_{\mc P} P\d p_E(P)=\int_{\mc P_1} P\d p_E(P)+\sigma_0 p_E(\{\sigma_0\}) \\
&=&\int_0^{\pi}\int_{0}^{2\pi} \frac12(\sigma_0+\vp(\theta,\fii)\cdot\vsigma)\d p_E(\theta,\fii)+\sigma_0 p_E(\{\sigma_0\}) \\
&=&
\frac12\underbrace{\left[1+p_E(\{\sigma_0\})-p_E(\{0\})\right]}_{=\;e^0}\sigma_0+
\frac12\underbrace{\left[
\int_0^{\pi}\int_{0}^{2\pi} \vp(\theta,\fii)\d p_E(\theta,\fii)
\right]}_{=\;\ve}\cdot\vsigma.
\end{eqnarray*}
Note that the positive measure on the Bloch sphere is not necessarily normalized:
$$
\int_0^{\pi}\int_{0}^{2\pi} \d p_E(\theta,\fii)=1-p_E(\{\sigma_0\})-p_E(\{0\}).
$$
For a given $E$ as in \eqref{E} with $\ve\ne0$, define the projection $P_E:=\frac12\big(\sigma_0+\|\ve\|^{-1}\ve\cdot\vsigma\big)$ and choose
$$
p_E
=\lambda^E_-\delta_{\sigma_0}+(1-\lambda^E_+)\delta_0+(\lambda^E_+-\lambda^E_-)\delta_{P_E}
=\frac{1}{2}(e^0-\|\ve\|)\delta_{\sigma_0}+\left[1-\frac{1}{2}(e^0+\|\ve\|)\right]\delta_0+\|\ve\|\delta_{P_E}.
$$
One may easily check that $E$ is represented by $p_E$, i.e.\ $E=\int P\d p_E(P)$.

\subsection{Example: Qubit channels}

Recall that a quantum channel $\Phi:\lk\to\lh$ (with finite-dimensional $\hi$ and $\ki$) is extreme in the convex set of channels $\lh\to\lk$ if it has a minimal set $\{K_\ell\}_{\ell\in L}$ of Kraus operators such that the set $\{K_k^*K_\ell\}_{k,\ell\in L}$ is linearly independent. The requirement that $\{K_\ell\}_{\ell\in L}$ be a minimal set of Kraus operators for $\Phi$ means that
$$
\Phi(B)=\sum_{\ell\in L}K_\ell^*BK_\ell
$$
for all $B\in\lk$ and that $\{K_\ell\}_{\ell\in L}$ is linearly independent. If the space $\hi$ is infinite dimensional, the extremality condition is essentially the same as above but with slight modifications \cite{Tsuikkis}.

We call quantum channels $\Phi:\lk\to\mc L(\C^2)$ (in Heisenberg picture) {\it qubit channels} with output space $\ki$. Next we will characterize the extreme points of this convex set with a fixed output space $\ki$. Suppose that $\{K_\ell\}_{\ell\in L}$ is a minimal set of Kraus operators for an extreme qubit channel $\Phi$ with output space $\mc K$. Since $\mc L(\C^2)$ is 4-dimensional, we must have $|L|\in\{1,2\}$, so that the set $\{K_k^*K_\ell\}_{k,\ell\in L}$ can be linearly independent. Thus, this set is either a singleton (consisting of a single isometry) or $\{K_0,K_1\}$. The first case is rather trivial, so let us concentrate on the second case. We now have
$$
K_0^*K_0+K_1^*K_1=\id.
$$
This means that $K_0^*K_0$ cannot be a multiple of identity because then also $K_1^*K_1$ would be a multiple of identity too and the set $\{K_0^*K_0,K_1^*K_1\}$ would already be linearly dependent. Thus, $K_0^*K_0$ must have a non-degenerate spectrum and, thus, a unique eigenbasis (up to phase factors). It turns out that the remaining extremality condition is quite mild, as the following result tells us.

\begin{proposition}\label{prop:extqubit}
Let $\Phi:\lk\to\mc L(\C^2)$ be a qubit channel. For this channel to be extreme within the set of qubit channels with output space $\ki$, it is necessary that $\Phi$ has a minimal set of Kraus operators consisting of
\begin{itemize}
\item[(i)] only one isometry $K_0$ or
\item[(ii)] two non-zero operators $K_0$ and $K_1$.
\end{itemize}
In case (i), this is also a sufficient condition for the extremality of $\Phi$. For $\Phi$ of case (ii) to be extreme, it is necessary that $K_0^*K_0$ has a non-degenerate spectrum and, thus, a unique eigenbasis $\{|0\>,|1\>\}$ (up to phase factors). This qubit channel is now extreme if and only if
\begin{equation}\label{eq:ehtoE}
|\<0|K_0^*K_1|1\>|\neq |\<1|K_0^*K_1|0\>|.
\end{equation}
\end{proposition}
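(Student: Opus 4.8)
The plan is to reduce everything to the extremality criterion recalled just before the statement: a channel between finite-dimensional spaces is extreme precisely when it admits a minimal Kraus decomposition $\{K_\ell\}$ for which the family $\{K_k^*K_\ell\}$ is linearly independent. The necessity of the forms (i) and (ii), and of the non-degeneracy of $K_0^*K_0$ in case (ii), has already been argued in the paragraph preceding the statement, so I would merely recall those points. Case (i) is then immediate: a single isometry $K_0$ produces the one-element family $\{K_0^*K_0\}=\{\id\}$, which is trivially linearly independent, so such a $\Phi$ is automatically extreme. The entire content therefore lies in case (ii), where the four products $K_0^*K_0,\ K_1^*K_1,\ K_0^*K_1,\ K_1^*K_0$ live in the $4$-dimensional space $\mathcal{L}(\C^2)$, and extremality is equivalent to their forming a basis.

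For this I would work in the eigenbasis $\{\ket0,\ket1\}$ of $K_0^*K_0$ supplied by the non-degeneracy assumption, writing $K_0^*K_0=\mathrm{diag}(a,b)$ with $a\neq b$; the unitality constraint $K_0^*K_0+K_1^*K_1=\id$ then forces $K_1^*K_1=\mathrm{diag}(1-a,1-b)$. Since $a\neq b$, these two diagonal operators are linearly independent (the relevant $2\times 2$ determinant equals $a-b$) and hence span the whole subspace of diagonal matrices. Writing $K_0^*K_1=\left(\begin{smallmatrix}p&q\\ r&s\end{smallmatrix}\right)$ in the same basis, so that $K_1^*K_0=(K_0^*K_1)^*$ has entries $\bar p,\bar r,\bar q,\bar s$, I would test a vanishing combination $\alpha K_0^*K_0+\beta K_1^*K_1+\gamma K_0^*K_1+\delta K_1^*K_0=0$. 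Its off-diagonal entries give the homogeneous system $\gamma q+\delta\bar r=0$ and $\gamma r+\delta\bar q=0$, whose coefficient determinant is $q\bar q-r\bar r=|q|^2-|r|^2$, where $q=\<0|K_0^*K_1|1\>$ and $r=\<1|K_0^*K_1|0\>$.

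If $|q|\neq|r|$ this determinant is nonzero, forcing $\gamma=\delta=0$; the remaining diagonal equations together with $a\neq b$ then force $\alpha=\beta=0$, so the four products are linearly independent and $\Phi$ is extreme. Conversely, if $|q|=|r|$ one can choose a nontrivial $(\gamma,\delta)$ solving the off-diagonal system, and the residual diagonal contribution $\gamma\,\mathrm{diag}(p,s)+\delta\,\mathrm{diag}(\bar p,\bar s)$ can be cancelled by a suitable choice of $\alpha,\beta$, precisely because $K_0^*K_0$ and $K_1^*K_1$ already span the diagonals; this produces a nontrivial dependence, so $\Phi$ is not extreme. This gives exactly the criterion \eqref{eq:ehtoE}. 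The step I expect to demand the most care is this final bookkeeping: one must verify that the diagonal parts $p,s,\bar p,\bar s$ of the off-diagonal products never obstruct the argument, and the clean reason is that the non-degeneracy $a\neq b$ renders the diagonal subspace fully accessible through $K_0^*K_0$ and $K_1^*K_1$ alone, so that independence of the whole family is governed solely by the genuinely off-diagonal block.
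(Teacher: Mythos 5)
Your argument is correct, and it reaches the conclusion by a genuinely different and more economical route than the paper. The paper's proof first disposes of the rank-one subcase ($K_0=\kb{w}{0}$, $K_1=\kb{w'}{1}$ via polar decomposition) separately, then in the full-rank case writes $K_0=U_0E^{1/2}$, $K_1=U_1(\id-E)^{1/2}$, conjugates the product set by $E^{-1/2}$ to reduce to linear independence of $\{\id,VA^{1/2},A^{1/2}V^*,A\}$ with $V=U_0^*U_1$ and $A=E^{-1}-\id$, and then carries out an explicit coefficient computation with a determinant condition for sufficiency and a separate, rather long, construction of a nontrivial dependence for necessity. Your proof bypasses all of this: working directly in the eigenbasis of $K_0^*K_0$, the non-degeneracy $a\neq b$ makes $K_0^*K_0$ and $K_1^*K_1$ span the diagonal matrices, so linear independence of the four products is decided entirely by the $2\times2$ off-diagonal system with coefficient matrix $\left(\begin{smallmatrix}q&\bar r\\ r&\bar q\end{smallmatrix}\right)$ and determinant $|q|^2-|r|^2$; this treats both directions of the equivalence and all rank configurations uniformly, and makes condition \eqref{eq:ehtoE} appear transparently as the non-vanishing of that determinant (your observation that the diagonal residue $\gamma\,\mathrm{diag}(p,s)+\delta\,\mathrm{diag}(\bar p,\bar s)$ can always be absorbed is exactly the right closing step, and it is valid). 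What the paper's longer route buys is the intermediate reformulation in terms of the unitary $V$ (its condition \eqref{eq:ehtoS} and the inequality $\mu|\<0|V|1\>|^2\neq\lambda|\<1|V|0\>|^2$), which feeds directly into the explicit parametrization $\Phi_{\bm{p},\bm{q},\bm{r},a,b,\tj_1,\tj_2,\fii_1,\fii_2}$ of extreme qubit channels used immediately afterwards; if you adopt your shorter proof you would need to rederive that reformulation (a two-line polar-decomposition computation) for the subsequent example.
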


\begin{proof}
As we already observed, for $\Phi$ to be extreme within the set of qubit channels with output space $\ki$, it must have a minimal set of Kraus operators containing only up to two operators. In the case of a single operator $K_0$, $K_0$ must be an isometry so that the normalization condition $K_0^*K_0=\id$ holds. The singleton $\{K_0^*K_0\}=\{\id\}$ is automatically linearly independent, so in this case $\Phi$ is extreme. Let us concentrate to the case (ii) for the remainder of this proof and let $\{K_0,K_1\}$ be a set of Kraus operators for $\Phi$. We have already observed that $K_0^*K_0$ (or, equivalently, $K_1^*K_1$) must have a non-degenerate spectrum and, thus, a unique eigenbasis $\{|0\>,|1\>\}$ (up to phase factors) so that $\Phi$ is extreme. Let us thus assume that
\begin{align*}
E&:=K_0^*K_0=a|0\>\<0|+b|1\>\<1|,\\
\id-E&:=K_1^*K_1=(1-a)|0\>\<0|+(1-b)|1\>\<1|,
\end{align*}
where $a,b\in[0,1]$, $a\neq b$. Let us first assume that both $E$ and $\id-E$ are of rank 1. Naturally, this means that one of the eigenvalues, say $a$, is 1, so that $b=0$. Using the polar decomposition of operators, this means that there are unit vectors $w,w'\in\ki$ such that $K_0=|w\>\<0|$ and $K_1=|w'\>\<1|$. We now have
$$
\{K_i^*K_j\}_{i,j=0}^1=\{|0\>\<0|,\<w|w'\>|0\>\<1|,\<w'|w\>|1\>\<0|,|1\>\<1|\}.
$$
This set is linearly independent if and only if $\<w|w'\>\neq0$ which is evidently equivalent with \eqref{eq:ehtoE}.

Let us now assume that at least one of the operators $E$ or $\id-E$ is of rank 2 (full rank). We may freely assume that $E$ is of rank 2 (i.e.,\ $a,b>0$) and, hence, has an inverse $E^{-1}$. Due to the polar decomposition, there are isometries $U_0$ and $U_1$ so that $K_0=U_0E^{1/2}$ and $K_1=U_1(\id-E)^{1/2}$. Denote $V:=U_0^*U_1$. We now have
$$
\{K_i^*K_j\}_{i,j=0}^1=\{E,E^{1/2}V(\id-E)^{1/2},(\id-E)^{1/2}V^*E^{1/2},\id-E\}.
$$
This set is linearly independent if and only if the set
$$
\mc S:=\{\id,VA^{1/2},A^{1/2}V^*,A\}
$$
is linearly independent where $A:=E^{-1}-\id$ is a positive operator with the eigenvalues $\lambda:=1/a-1$ and $\mu:=1/b-1$. This is seen by multiplying the set $\{K_i^*K_j\}_{i,j=0}^1$ on both sides with $E^{-1/2}$. For $\mc S$ to be linearly independent, the subset $\{\id,VA^{1/2},A\}$ must be linearly independent. Denote $v:=V|0\>$, $\tilde{v}:=V|1\>$. For $\alpha_i\in\C$, $i=1,2,3$, we have
\begin{align*}
\alpha_1 \id+\alpha_2 VA^{1/2}+\alpha_3 A&=(\alpha_1+\lambda\alpha_3)|0\>\<0|+(\alpha_1+\mu\alpha_3)|1\>\<1|+\sqrt{\lambda}\alpha_2|v\>\<0|+\sqrt{\mu}\alpha_2|\tilde{v}\>\<1|.
\end{align*}
Reading from the above and using the fact that $\lambda\neq\mu$, we have that $\{\id,VA^{1/2},A\}$ is linearly independent if and only if $\{|0\>\<0|,|v\>\<0|,|\tilde{v}\>\<1|,|1\>\<1|\}$ is linearly independent. For the latter, $\{|0\>\<0|,|v\>\<0|,|1\>\<1|\}$ must be linearly independent, implying that we cannot have $\<1|v\>=0$, i.e.,\ $\<1|V|0\>=0$. Indeed, if $\<1|v\>=0$, $v$ would have no contribution from $|1\>$ and would be a multiple of $|0\>$, whence the set $\{|0\>\<0|,|v\>\<0|,|1\>\<1|\}$ would be linearly dependent. Thus, $\<1|V|0\>\neq0$. Similarly, we see that $\<0|\tilde{v}\>\neq0$, i.e.,\ $\<0|V|1\>\neq0$ if $\{|0\>\<0|,|\tilde{v}\>\<1|,|1\>\<1|\}$ is to be linearly independent. Thus, for $\mc S$ to be linearly independent, we must have
\begin{equation}\label{eq:ehtoS}
\<1|V|0\>\neq0\neq\<0|V|1\>.
\end{equation}

Let now $\alpha_i\in\C$, $i=1,2,3,4$, be such that $\alpha_1\id+\alpha_2 VA^{1/2}+\alpha_3 A^{1/2}V^*+\alpha_4 A=0$. This is equivalent with
\begin{align}
0&=\big[\alpha_1+\sqrt{\lambda}(\<0|V|0\>\alpha_2+\<0|V^*|0\>\alpha_3)+\lambda\alpha_4\big]|0\>\<0|+\big[\sqrt{\mu}\<0|V|1\>\alpha_2+\sqrt{\lambda}\<0|V^*|1\>\alpha_3\big]|0\>\<1|\nonumber\\
&+\big[\sqrt{\lambda}\<1|V|0\>\alpha_2+\sqrt{\mu}\<1|V^*|0\>\alpha_3\big]|1\>\<0|+\big[\alpha_1+\sqrt{\mu}(\<1|V|1\>\alpha_2+\<1|V^*|1\>\alpha_3)+\mu\alpha_4\big]|1\>\<1|.\label{eq:linrton}
\end{align}
Checking the determinant condition, we see that the simultaneous vanishing of the factors of $|0\>\<1|$ and $|1\>\<0|$ in \eqref{eq:linrton} only allow for the trivial solution $\alpha_2=0=\alpha_3$ if and only if $\mu|\<0|V|1\>|^2\neq\lambda|\<1|V|0\>|^2$. This condition is immediately seen to coincide with \eqref{eq:ehtoE}. Thus, if \eqref{eq:ehtoE} holds, we get $\alpha_2=0=\alpha_3$ from the coefficients of $|0\>\<1|$ and $|1\>\<0|$ in \eqref{eq:linrton}, and what remains is $\alpha_1+\lambda\alpha_4=0$ and $\alpha_1+\mu\alpha_4=0$. Since $\lambda\neq\mu$, this can only happen if $\alpha_1=0=\alpha_4$. Thus, \eqref{eq:ehtoE} is a sufficient condition for the extremality of $\Phi$.

Let us now show the necessity of \eqref{eq:ehtoE} for the extremality of $\Phi$. Assume that \eqref{eq:ehtoE} does not hold, i.e.,\ $\mu|\<0|V|1\>|^2=\lambda|\<1|V|0\>|^2$. Let $\alpha_i\in\C$, $i=1,2,3,4$, be such that \eqref{eq:linrton} holds. Let us first deal with the case $\mu|\<0|V|1\>|^2=\lambda|\<1|V|0\>|^2=0$. In the case $\mu=0$, we must thus have $\lambda=0$ or $\<1|V|0\>=0$. In the first case, from \eqref{eq:linrton}, we get $\alpha_1=0$ but $\alpha_2$, $\alpha_3$, and $\alpha_4$ may be whatever, i.e.,\ $\mc S$ is not linearly independent and $\Phi$ is not extreme. In the second case, we have already seen when deriving \eqref{eq:ehtoS}, that $\Phi$ cannot be extreme. The case $\lambda=0$ is treated similarly. The cases $\<0|V|1\>=0$ or $\<1|V|0\>=0$ directly lead to $\Phi$ not being extreme, as we have seen when deriving \eqref{eq:ehtoS}. Thus, we may assume that $\mu|\<0|V|1\>|^2=\lambda|\<1|V|0\>|^2$ where all the terms $\lambda$, $\mu$, $\<0|V|1\>$, and $\<1|V|0\>$ are non-zero. Moreover, we may assume $\lambda\neq\mu$ because $\mu=\lambda$ leads to $\{K_0^*K_0,K_1^*K_1\}$ already being linearly dependent. Using again the equations arising from the coefficients of $|0\>\<1|$ and $|1\>\<0|$ in \eqref{eq:linrton}, we have
$$
\alpha_3=-\sqrt{\frac{\mu}{\lambda}}\frac{\<0|V|1\>}{\<0|V^*|1\>}\alpha_2=:c\alpha_2.
$$
Denoting $L:=\<0|V|0\>+\<0|V^*|0\>c$ and $M:=\<1|V|1\>+\<1|V^*|1\>c$, we obtain from the coefficients of $|0\>\<0|$ and $|1\>\<1|$ in \eqref{eq:linrton}
\begin{equation}\label{eq:yhtaloryhma}
\left\{\begin{array}{rcl}
\alpha_1+\sqrt{\lambda}L\alpha_2+\lambda\alpha_4&=&0,\\
\alpha_1+\sqrt{\mu}M\alpha_2+\mu\alpha_4&=&0.
\end{array}\right.
\end{equation}
Subtracting these equalities and solving for $\alpha_4$, we get
$$
\alpha_4=\frac{\sqrt{\mu}M-\sqrt{\lambda}L}{\lambda-\mu}\alpha_2.
$$
Recall that $\lambda\neq\mu$, so this makes sense. Substituting this in \eqref{eq:yhtaloryhma}, gives two equivalent equations, as one easily checks, and they give
$$
\alpha_1=-\left(\sqrt{\mu}M+\mu\frac{\sqrt{\mu}M-\sqrt{\lambda}L}{\lambda-\mu}\right)\alpha_2=:M\alpha_2.
$$
All in all, by letting $\alpha_2\neq0$, we have $\alpha_3=c\alpha_2\neq0$ and, as we have seen, also $\alpha_1$ and $\alpha_4$ can be chosen to be multiples of $\alpha_2$, so that $\alpha_1\id+\alpha_2 VA^{1/2}+\alpha_2 A^{1/2}V^*+\alpha_4 A=0$, implying that $\mc S$ is linearly dependent, i.e.,\ $\Phi$ is not extreme. The factors $\alpha_1$ and $\alpha_4$ may vanish, but at least we can choose $\alpha_2\neq0\neq\alpha_3$ which is enough.
\end{proof}

Let us write the condition \eqref{eq:ehtoE} in a slightly different form. Let us assume that we are in the situation (ii) of Proposition \ref{prop:extqubit}. Using the polar decomposition, we find $a,b\in[0,1]$, a basis $\{|0\>,|1\>\}$ of $\C^2$, and unit vectors $w,w^\perp,\tilde{w},\tilde{w}^\perp\in\ki$ such that $\<w|w^\perp\>=0=\<\tilde{w}|\tilde{w}^\perp\>$ and
$$
K_0=\sqrt{a}|w\>\<0|+\sqrt{b}|w^\perp\>\<1|,\quad K_1=\sqrt{1-a}|\tilde{w}\>\<0|+\sqrt{1-b}|\tilde{w}^\perp\>\<1|.
$$
We now have
\begin{align*}
K_0^*K_1&=\sqrt{a(1-a)}\<w|\tilde{w}\>|0\>\<0|+\sqrt{a(1-b)}\<w|\tilde{w}^\perp\>|0\>\<1|\\
&+\sqrt{b(1-a)}\<w^\perp|\tilde{w}\>|1\>\<0|+\sqrt{b(1-b)}\<w^\perp|\tilde{w}^\perp\>|1\>\<1|.
\end{align*}
From this we find that \eqref{eq:ehtoE} is equivalent with
$$
\sqrt{a(1-b)}|\<w|\tilde{w}^\perp\>|\neq\sqrt{b(1-a)}|\<w^\perp|\tilde{w}\>|.
$$ 
Let us now assume $\mc K=\C^2$ because this is the simplest case. Let us fix a unit-vector-valued map $\mb S^2\ni\bm{p}\mapsto v_{\bm{p}}\in\C^2$ such that $|v_{\bm{p}}\>\<v_{\bm{p}}|=(1/2)(\id+\bm{p}\cdot\bm{\sigma})$, i.e.,\ we fix a section from the projective space of $\C^2$ into $\C^2$. Naturally, $\{v_{\bm{p}},v_{-\bm{p}}\}$ is an orthonormal basis of $\C^2$ and any orthonormal basis of $\C^2$ is of the form $\{e^{i\tj_1}v_{\bm{p}},e^{i\tj_2}v_{-\bm{p}}\}$ for some $\bm{p}\in\mb S^2$ and $\tj_1,\tj_2\in[0,2\pi)$. Define, for any $\bm{p},\bm{q}\in\mb S^2$, $a,b\in[0,1]$, and $\tj_1,\tj_2\in[0,2\pi)$,
$$
K_{\bm{p},\bm{q},a,b,\tj_1,\tj_2}:=e^{i\tj_1}\sqrt{a}|v_{\bm{q}}\>\<v_{\bm{p}}|+e^{i\tj_2}\sqrt{b}|v_{-\bm{q}}\>\<v_{-\bm{p}}|.
$$
We may now associate to any channel $\Phi:\mc L(\C^2)\to\mc L(\C^2)$ with up to two Kraus operators a set of Kraus operators of the form $\{K_{\bm{p},\bm{q},a,b,\tj_1,\tj_2},K_{\bm{p},\bm{r},1-a,1-b,\fii_1,\fii_2}\}$ where $\bm{p},\bm{q},\bm{r}\in\mb S^2$, $a,b\in[0,1]$, and $\tj_1,\tj_2,\fii_1,\fii_2\in[0,2\pi)$, and we denote this channel by $\Phi_{\bm{p},\bm{q},\bm{r},a,b,\tj_1,\tj_2,\fii_1,\fii_2}$. For this channel, the condition \eqref{eq:ehtoE} reads
$$
a(1-b)(1-\bm{q}\cdot\bm{r})\neq b(1-a)(1-\bm{q}\cdot\bm{r})
$$
which is satisfied if and only if $a\neq b$ and $\bm{q}\neq\bm{r}$. This means that the set of extreme points of the set of channels $\Phi:\mc L(\C^2)\to\mc L(\C^2)$ coincides with the union of two sets: the set of channels with only one Kraus operator, i.e.,\ unitary channels $B\mapsto U^*BU$, and the set of those $\Phi_{\bm{p},\bm{q},\bm{r},a,b,\tj_1,\tj_2,\fii_1,\fii_2}$ such that $a\neq b$ and $\bm{q}\neq\bm{r}$.

Let us denote the set of those $(\bm{p},\bm{q},\bm{r},a,b,\tj_1,\tj_2,\fii_1,\fii_2)\in(\mb S^2)^3\times[0,1]^2\times[0,2\pi)^4$ such that $\bm{q}\neq\bm{r}$ and $a\neq b$ by $\mc C$ and the group of unitary ($2\times 2$)-matrices by $U(2)$. According to Theorem \ref{thm:bary}, any channel $\Phi:\mc L(\C^2)\to\mc L(\C^2)$ can now be expressed with some $t\in[0,1]$ and probability measures $p_1:\mc B\big(U(2)\big)\to[0,1]$ and $p_2:\mc B(\mc C)\to[0,1]$ through
$$
\Phi(B)=t\int_{U(2)}U^*BU\,\d p_1(U)+(1-t)\int_{\mc C} \Phi_{\bm{p},\bm{q},\bm{r},a,b,\tj_1,\tj_2,\fii_1,\fii_2}(B)\,\d p_2(\bm{p},\bm{q},\bm{r},a,b,\tj_1,\tj_2,\fii_1,\fii_2)
$$
for all $B\in\mc L(\C^2)$.

\section{Conclusions}

We have given barycentric decompositions for instruments with a separable input Hilbert space, finite-dimensional output space, and a value set which is a locally compact second-countable Hausdorff space. As special cases, we obtain barycentric decompositions for quantum measurements in a separable Hilbert space and value sets like those above (generalizing the results of \cite{ChDASc}), for quantum channels with a separable (Schr\"{odinger}) input space and finite-dimensional output space, and for quantum states. In all these cases, the probability measure whose barycentre coincides with the quantum device of interest can be assumed to be supported by the set of extreme points. This means that we may write instruments, measurements, channels, and states satisfying the above (very loose) conditions as convex integrals over the sets of extreme instruments, measurements, channels, and pure states respectively. The result on states is, naturally, rather trivial since spectral decompositions already give barycentric decompositions for states over pure states.

Instead of the set of all instruments (or all measurements or channels) we are often interested in particular convex subsets of instruments. In such a situation, it is natural to look into the barycentric decompositions of instruments over the set of extreme points of this more specific convex set. One particularly interesting example of such subsets is covariance structures. Here, we have (strongly continuous) representations $U$ and $V$ of a (locally compact) group $G$ in $\hi$ (input space) and $\mc K$ (output space), respectively. We assume that $G$ continuously acts on the value set $\Omega$ which is also locally compact and second-countable Hausdorff space, i.e.,\ we have the continuous map $(G,\Omega)\ni(g,x)\mapsto gx\in X$ with $(gh)x=g(hx)$ and $ex=x$ for all $g,h\in G$ and $x\in\Omega$ where $e\in G$ is the neutral element. The $(U,V,\Omega)$-covariance structure is now the convex set of those instruments $\M\in\In\big(\mc B(\Omega),\mc K,\mc H)$ such that
$$
\M\big(gX,V(g)BV(g)^*\big)=U(g)\M(X,B)U(g)^*
$$
for all $g\in G$, $X\in\mc B(\Omega)$, and $B\in\lk$. Instruments like these reflect the physical symmetries of the quantum input and output systems and the value set with respect to the symmetry group $G$. The extreme points of such sets can be characterized concisely at least when the stability subgroup of $G$ with respect to the value set $\Omega$ is compact \cite{HaPe2021}. It is natural next to give $(U,V,\Omega)$-covariant instruments barycentric decompositions over the extreme points of the $(U,V,\Omega)$-covariance structure. However, we leave this question for future study.

\section*{Acknowledgements}

RU is thankful for the financial support from the Swiss National Science Foundation (Ambizione PZ00P2-202179).
EH is supported by the National Research Foundation, Singapore and A*STAR under its CQT Bridging Grant.

\end{document}